\newif\ifshowtodo
\newif\ifver
\long\def\ver#1{\ifver{\color{blue!80!black}#1}\else{#1}\fi}
\newcommand{\VersionLength}{long}
\providecommand{\verlong}{\ifthenelse{\equal{\VersionLength}{long}}}
\newcommand{\VersionCols}{double}
\providecommand{\dcol}{\ifthenelse{\equal{\VersionCols}{double}}}
\title{Exact minimum number of bits \\
to stabilize a linear system}
\author{Victoria Kostina, Yuval Peres, Gireeja Ranade, Mark Sellke
\thanks{
V.~Kostina (\href{mailto:vkostina@caltech.edu}{vkostina@caltech.edu}) is with California Institute of Technology, Pasadena, CA.
Y.~Peres (\href{mailto:yuval@yuvalperes.com}{yuval@yuvalperes.com}) is an independent researcher.
G.~Ranade (\href{mailto:gireeja@eecs.berkeley.edu}{ranade@eecs.berkeley.edu}) is with the University of California, Berkeley, CA.
M.~Sellke (\href{mailto:msellke@stanford.edu}{msellke@stanford.edu}) is with Stanford University, CA. This work was supported in part by the National Science Foundation (NSF)
under Grant CCF-1751356, and by the Simons Institute for the Theory of Computing. Research of Y.~Peres was partially supported by NSF grant DMS-1900008. G.~Ranade acknowledges the Siebel Energy Institute Seed Funding.
}
}
\date{}							
\begin{document}
\maketitle

\begin{abstract}
We consider an unstable scalar linear stochastic system, $X_{n+1}=a X_n + Z_n - U_n$, where $a \geq 1$ is the system gain, $Z_n$'s are independent random variables with bounded $\alpha$-th moments, and $U_n$'s are the control actions that are chosen by a controller who receives a single element of a finite set $\{1, \ldots, M\}$ as its only information about system state $X_i$.  We show new proofs that $M > a$ is necessary and sufficient for $\beta$-moment stability, for any $\beta < \alpha$. Our achievable scheme is a  uniform quantizer of the zoom-in / zoom-out type \ver{that codes over multiple time instants for data rate efficiency; the controller uses its memory of the past to correctly interpret the received bits.} We analyze its performance using probabilistic arguments. We \ver{show a simple proof of} a matching converse using information-theoretic techniques. Our results generalize to vector systems, to systems with dependent Gaussian noise, and to the scenario in which a small fraction of transmitted messages is lost.
\end{abstract}
\begin{IEEEkeywords}
Linear stochastic control, source coding, data rate theorem. 
\end{IEEEkeywords}

\section{Introduction}
We study the tradeoff between stabilizability of a linear stochastic system and the coarseness of the quantizer used to represent the state.  The evolution of the system is described by
\begin{align}
 X_{n+1} = a X_n  + Z_n - U_n, \label{eq:systemscalar}
\end{align}
where constant $a \geq 1$; $X_1$ and $Z_1, Z_2, \ldots$ are independent random variables with bounded $\alpha$-th moments, and $U_n$ is the control action chosen based on the history of quantized observations.  More precisely, an \emph{$M$-bin causal quantizer-controller} for $X_1, X_2, \ldots$ is a sequence $\{\mathsf f_n, \mathsf g_n\}_{n = 1}^\infty$, where $\mathsf f_n \colon \mathbb R^n \mapsto [M]$ is the encoding (quantizing) function, and $\mathsf g_n \colon [M] \mapsto \mathbb R^n$ is the decoding (controlling) function, and $[M] \triangleq \{1, 2, \ldots, M\}$. At time $i$, the controller outputs
\begin{align}
 U_n = \mathsf g_n( \mathsf f_1(X_1), \mathsf f_2(X^2), \ldots, \mathsf f_n (X^n)).
\end{align}
The fundamental operational limit of quantized control of interest in this paper is the minimum number of quantization bins to achieve $\beta$-moment stability:
\begin{align}
M^\star_\beta \triangleq \min \bigg\{ M \colon &\exists\, M\text{-bin causal quantizer-controller } 
\dcol{\notag \\&}{}   \text{ s.t.~ } \lim \sup_n \E{|X_n|^\beta} < \infty  \bigg\},
\label{eq:Mstardef}
\end{align}
where  $0<\beta<\alpha$ is fixed. 

The main results of the paper are new proofs of the following achievability and converse theorems, whose various special cases have been previously shown in literature. 
\begin{thm}[achievability]
\label{thm:main}
Let $X_1, ~Z_n$ in \eqref{eq:systemscalar} be independent random variables with bounded $\alpha$-moments. Then for any $0<\beta<\alpha$
\begin{equation}
M^\star_\beta \leq \lfloor a\rfloor + 1. \label{eq:Mstar}
\end{equation}
\end{thm}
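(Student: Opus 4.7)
The plan is to construct an adaptive uniform quantizer-controller of the zoom-in/zoom-out type with exactly $M=\lfloor a\rfloor+1$ bins, so that $\log M$ exceeds $\log a$ by the minimum amount needed for any contraction. The key obstruction is that $M-a$ can be arbitrarily small (when $a$ is just above an integer), so one cannot afford to reserve a symbol per time step as an overflow flag: that would leave only $M-1\leq a$ quantization bins per step, insufficient for contraction. The remedy is to code jointly over blocks of $k$ time steps. Of the $M^k$ possible symbol strings in a block I would use $N=\lfloor(1-\eta)M^k\rfloor$ to indicate one of $N$ equal sub-intervals of a common encoder-decoder interval $I_n=[c_n-L_n/2,\,c_n+L_n/2]$ containing $X_n$, and reserve the remaining $\eta M^k$ strings as zoom-out codewords. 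For small $\eta$ and large $k$ the per-block contraction factor $a^k/N$ is strictly less than $1$.

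In zoom-in mode the encoder transmits the $k$-symbol index of the bin of $I_n$ containing $X_n$, and the controller applies the control $U_n$ that drives the nominal state to the midpoint of that bin; over the block, deterministic dynamics scale $L_n$ by $a^k/N$ and noise contributes an additive term whose tail is polynomial by Markov applied to the $\alpha$-th moment of $Z$. When $X_n\notin I_n$ the encoder instead sends a reserved zoom-out codeword; on decoding it the controller inflates $L_n$ by a factor $r\gg a^k$, large enough to recapture the escaped state with overwhelming probability. The probability of overflow in a block is itself bounded by a tail bound on the accumulated $k$-step noise.

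Stability would then follow from a single one-block Lyapunov-style recursion $\mathbb{E}[L_{(n+1)k}^\beta\mid \mathcal{F}_{nk}]\leq \rho\, L_{nk}^\beta+C$ with $\rho<1$; iterating in $n$ yields $\sup_n \mathbb{E}|X_n|^\beta<\infty$. The principal obstacle is tuning the triple $(k,\eta,r)$ so this recursion actually closes: the contractive factor $(a^k/N)^\beta$ from zoom-in must dominate the expected growth $r^\beta\,\Pr[\text{overflow}]$ from zoom-out, and the overflow term must have finite $\beta$-th moment, which is where the hypothesis $\beta<\alpha$ enters in an essential way. A secondary subtlety, highlighted in the abstract, is that because interpretation of bits is block-based, the controller may need to reinterpret a block retrospectively upon decoding a zoom-out codeword at its end; a careful bookkeeping argument is required to show that the two parties stay synchronized and that no additional error propagates across blocks.
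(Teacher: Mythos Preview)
Your scheme is close in spirit to the paper's---both use an adaptive zoom-in/zoom-out $M$-level quantizer, code across several time steps so that overflow detection does not cost a full symbol per step, and aim for a geometric Lyapunov contraction on the interval width. The paper's implementation differs: it applies controls step-by-step during normal mode and dedicates one \emph{time step} per round (rather than a fraction $\eta$ of the $M^k$ block codewords) to a binary magnitude test; its emergency mode is explicitly variable-length, probing at geometrically growing thresholds $P^{j}C$ for $j=1,2,\dots$ until the state is recaptured, so rounds have random length $k+\tau$.

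There is, however, a real gap in your Lyapunov step. The one-block conditional recursion $\mathbb E\bigl[L_{(n+1)k}^\beta\mid\mathcal F_{nk}\bigr]\le\rho\, L_{nk}^\beta+C$ cannot hold as stated: on the $\mathcal F_{nk}$-measurable event $\{X_{nk}\notin I_{nk}\}$ the block is a zoom-out and $L_{(n+1)k}=rL_{nk}$ deterministically, so the conditional expectation equals $r^\beta L_{nk}^\beta>L_{nk}^\beta$, and no tuning of $(k,\eta,r)$ repairs this. Moreover, a single zoom-out by a fixed factor $r$ does not recapture almost surely, and while uncaptured the uncontrolled state grows by $a^k$ per block, so consecutive zoom-outs are unavoidable and any Lyapunov bounding $|X_n|$ must increase throughout that stretch. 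The paper resolves this by proving contraction per \emph{round} rather than per block, via an auxiliary process $\tilde X_n$ that front-loads the emergency growth; the key estimate balances the probability that the emergency lasts $j$ steps, which decays like $(P/a)^{-j\alpha}$, against the growth $(a/(1-\delta))^{j}$ incurred over those steps, and summing the minimum of these two bounds over $j$ yields a tail $O(t^{-(\alpha-\Delta)})$ on the per-round growth ratio. Closing that sum requires the probe factor to be large in terms of $\alpha/(\alpha-\beta)$, not merely $r\gg a^k$. Your intuition that ``$r^\beta\Pr[\text{overflow}]$ must be dominated by contraction'' is correct but is placed at the wrong granularity; you need the analogous per-round (random-length) argument.
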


\begin{thm}[converse]
\label{thm:mainc}
Let $X_1$, $Z_n$ in \eqref{eq:systemscalar} be independent random variables. Let $h(X_1) > - \infty$, where $h(X) \triangleq - \int_{\mathbb R} f_{X}(x) \log f_{X}(x) dx $ is the differential entropy. Then, for all $\beta > 0$, 
\begin{equation}
M^\star_\beta \geq \lfloor a\rfloor + 1. \label{eq:Mstarc}
\end{equation}
\end{thm}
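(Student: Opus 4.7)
The plan is to argue by contradiction, tracking information-theoretically the conditional differential entropy of the state given the controller's observations. Let $W_n \triangleq (\mathsf{f}_1(X_1), \ldots, \mathsf{f}_n(X^n))$ denote the full observation history; it is discrete with at most $M^n$ values, and each control $U_k$ is a deterministic function of $W_k \subseteq W_n$. Conditioning on $W_n$ therefore removes the control from the dynamics up to an additive constant, opening the way to a clean recursion.

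The central one-step estimate I plan to establish is $h(X_{n+1} \mid W_n) \geq h(X_n \mid W_{n-1}) + \log(a/M)$, which combines two standard bounds. Since $-U_n$ is $W_n$-measurable and $Z_n$ is independent of $(X^n, W_n)$,
\begin{equation*}
 h(X_{n+1} \mid W_n) = h(aX_n + Z_n \mid W_n) \geq h(aX_n \mid W_n) = \log a + h(X_n \mid W_n),
\end{equation*}
using $h(Y+Z \mid W) \geq h(Y \mid W)$ for $Z$ independent of $(Y,W)$ together with the scaling identity. Writing $W_n = (W_{n-1}, \mathsf{f}_n(X^n))$ and using that the appended symbol has discrete entropy at most $\log M$, one also gets $h(X_n \mid W_n) \geq h(X_n \mid W_{n-1}) - \log M$. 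Iterating these two inequalities yields $h(X_n) \geq h(X_n \mid W_{n-1}) \geq h(X_1) + (n-1)\log(a/M)$, which under $h(X_1) > -\infty$ tends to $+\infty$ whenever $M < a$.

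The finishing move is to convert unbounded differential entropy into unbounded $\beta$-moment: by the maximum-entropy principle among densities with a prescribed $\beta$-moment, $h(X) \leq \tfrac{1}{\beta}\log \E{|X|^\beta} + c(\beta)$ for some constant $c(\beta)$, so $h(X_n) \to \infty$ forces $\E{|X_n|^\beta} \to \infty$, contradicting $\beta$-moment stability and giving $M^\star_\beta \geq \lfloor a \rfloor + 1$ whenever $a$ is not an integer. The main subtlety I expect is the boundary case of integer $a$ with $M = a$, where $\log(a/M) = 0$ and the recursion degenerates to $h(X_n \mid W_{n-1}) \geq h(X_1)$. To close this gap I would upgrade the first step using the conditional entropy power inequality, $e^{2h(X_{n+1} \mid W_n)} \geq (a/M)^2 \, e^{2h(X_n \mid W_{n-1})} + e^{2h(Z_n)}$; at $a = M$ this forces linear-in-$n$ growth of $e^{2h(X_n)}$ and hence of the moment, provided $Z_n$ is non-degenerate.
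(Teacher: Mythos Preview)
Your proposal is correct and follows essentially the same information-theoretic route as the paper: both track the conditional differential entropy (equivalently, entropy power) of $X_n$ given the controller's observations, use the scaling identity, bound the information loss from an $M$-ary symbol by $\log M$, invoke the conditional entropy power inequality, and convert an entropy lower bound into a $\beta$-moment lower bound via the maximum-entropy characterization. The only cosmetic difference is that the paper applies the conditional EPI from the outset to obtain a single recursion $N(X_n\mid U^{n-1})\ge (a/M)^2 N(X_{n-1}\mid U^{n-2})+N(Z_{n-1})$ covering all $a$, whereas you first run the bare entropy argument for $M<a$ and then upgrade to EPI only for the boundary case $M=a$; your explicit caveat that the integer-$a$ step needs non-degenerate $Z_n$ is a point the paper leaves implicit.
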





The first achievability results \cite{baillieul1999feedback,wong1999systems} focused on unstable scalar systems with bounded disturbances, i.e. $|Z_n| \leq B$ a.s., and showed that a simple uniform quantizer with the number of quantization bins in \eqref{eq:Mstar} stabilizes such systems. That corresponds to the special case $\alpha = \beta = \infty$. Nair and Evans \cite{nair2004stabilizability} showed that time-invariant fixed-rate quantizers are unable to attain bounded cost if the noise is unbounded \cite{nair2004stabilizability}, regardless of their rate. The reason is that since the noise is unbounded, over time, a large magnitude noise realization will inevitably be encountered, and the dynamic range of the quantizer will be exceeded by a large margin, not permitting recovery. This necessitates the use of adaptive quantizers of zooming type \cite{gersho1974training,kieffer1983type,brockett2000quantized}.  Such quantizers ``zoom out'' (i.e. expand their quantization intervals) when the system is far from the target and  ``zoom in'' when the system is close to the target. 
Nair and Evans~\cite{nair2004stabilizability} constructed such an adaptive fixed-length quantizer with nonuniform quantization levels and showed second-moment stability via a recursive bound on its mean-squared error, under the assumption that the system noise has bounded $2+\epsilon$ moment, for some $\epsilon > 0$.  Under the same assumption, Y\"uksel \cite{yuksel2010fixedrate} {(see \cite{johnston2014stochastic},\cite{yuksel2012drift} for  generalizations to vector systems and to $\beta = 1, 2, \ldots$)} showed second-moment stabilizability using a uniform scalar quantizer that enters its zoom-out mode whenever its input falls outside its dynamic range. When applied to encode each $k$-th system state over the following $k$ time instances, the schemes in \cite{nair2004stabilizability,yuksel2010fixedrate} attain \eqref{eq:Mstar} for a large enough~$k$. See also \cite{sabag2020stabilizing}, which explores the use of constrained quantizers to encode the overflow event over multiple time instances. 

The converse in the special case of $\beta = 2$ was proved in \cite{nair2004stabilizability}, where it was shown that it is impossible to achieve second moment stability in the system in \eqref{eq:systemscalar} using a quantizer-controller with the number of bins $< \lfloor a\rfloor + 1$. This implies the validity of  \thmref{thm:mainc} for $\beta \geq 2$. Variants of the necessity result in \thmref{thm:mainc} are known for vector systems with bounded disturbances \cite{tatikonda2004control} and model uncertainty \cite{martins2006feedback}; for noiseless vector systems under  different stability criteria \cite{yuksel2006minimum}; for vector linear stochastic systems under second moment constraint stabilized over rate-constrained noiseless \cite{nair2004stabilizability} and packet-drop \cite{minero2009data} channels;  for vector linear stochastic systems stabilized in probability over noisy channels \cite{matveev2008state}; and for nonlinear systems with additive noise stabilized in probability over noisy channels \cite[Th. 3.1]{yuksel2016nonlinear}.

In this paper, we construct a new zoom-in zoom-out scheme that most of the time operates as if the noise were bounded, and relies on a periodic magnitude test to determine whether the state has left the quantized region. Similar to an application of the schemes in \cite{nair2004stabilizability,yuksel2010fixedrate} to an undersampled system with the transmission of a codeword over multiple time slots mentioned above, our strategy uses coding over multiple time instants, and the controller uses its memory of the past to correctly interpret the received bits. While the controller in the above modification of the known schemes is almost always silent, producing a large signal once in $k$ time instances, our controller is almost always active, producing a control signal optimized for bounded noise. Thus it introduces less delay. If the periodic magnitude test is failed, the quantizer-controller enters the zoom-out mode, which is essentially the same as in \cite{yuksel2010fixedrate}: the controller looks for the $X_n$ in exponentially larger intervals until it is located, at which point it returns to the zoom-in mode. We provide an elementary analysis of our scheme with an explicit bound on $k$ leading to \thmref{thm:main}.

We also present a short proof of the converse result in \thmref{thm:mainc} that uses information-theoretic arguments. We also provide an elementary converse proof  for stabilizability in probability that is tight for non-integer $a$.
 
In \secref{sec:achievable}, we describe our achievable scheme and give its analysis. In \secref{sec:converse}, we give a proof of the converse in \thmref{thm:mainc}. Our results generalize to  constant-length time delays, to control over communication channels that drop a small fraction of packets, to systems with dependent Gaussian noise, and to vector systems. These extensions are presented in \secref{section:generalizations}. This is a full version of the conference paper \cite{kostina2018exact}.  This paper presents full proofs (the proofs in \cite{kostina2018exact} are either omitted or replaced with proof outlines), and a much more comprehensive discussion of our results and their relationship to past and future research.  In addition, \thmref{thm:weak}, which presents a converse using an elementary probabilistic argument, is not contained in \cite{kostina2018exact}.

\section{Achievable scheme}
\label{sec:achievable}
\subsection{The idea}
Here we explain the idea of our achievable scheme. For readability we focus on the case $a\in [1,2)$ and show that the system can be controlled with 1 bit. In this case we will be able to restrict to two types of tests, a \emph{sign test} and a \emph{magnitude test} (see \figref{fig:quantizer}), which simplifies our procedure.   The straightforward extension to an arbitrary $a \geq 1$, in which the sign test is replaced by a uniform quantizer, is found in \secref{sec:finer} below.

\begin{figure}
\centering
  \centering
\subfigure[Sign test]{\label{fig:a}\includegraphics[width=60mm]{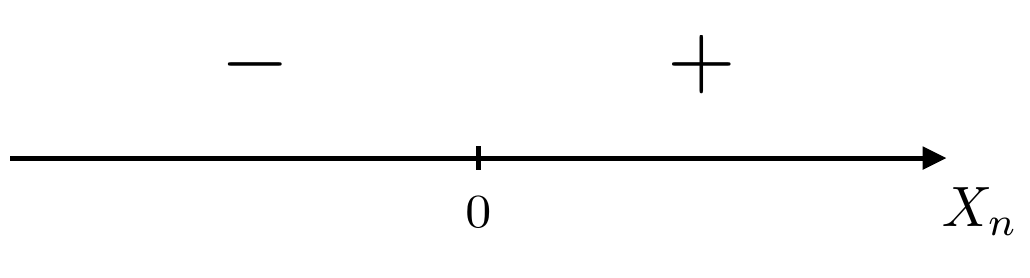}}
\subfigure[Magnitude test]{\label{fig:b}\includegraphics[width=60mm]{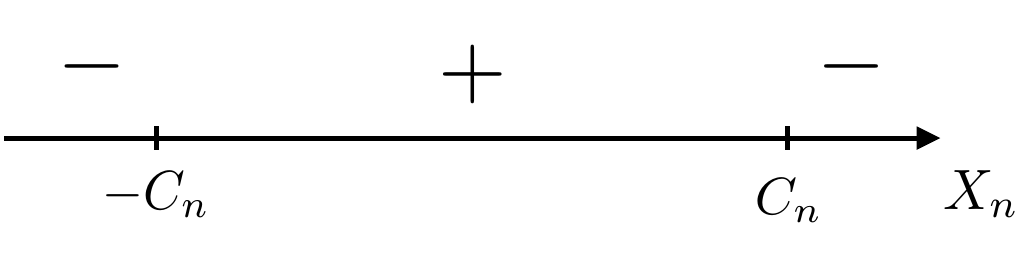}}
\caption{The binary quantizer uses two kinds of tests on a schedule determined by the previous $\pm$'s to produce the next $+$ or $-$.  }
\label{fig:quantizer}
\end{figure}

 In the case of bounded noise a uniform time-invariant  quantizer deterministically keeps $X_n$ bounded \cite{baillieul1999feedback,wong1999systems}. Indeed, when $|Z_n|\leq B$, $n = 1, 2, \ldots$ and $|X_1| \leq C_1$, if $C_1 \geq \frac {B}{1 - a/2}$ one can put  
\begin{align}
 C_2 \triangleq  (a/2) C_1 +B \leq C_1,
\end{align}
and putting further 
 $C_{n+1} \triangleq  (a/2) C_n+B$,
we obtain a monotonically decreasing to $\frac {B}{1 - a/2}$ sequence numbers $\{C_n\}_{n = 1}^\infty$. 
Setting
\begin{align}
 U_n = (a/2) C_n  \sign(X_n) \label{eq:Uibounded}
\end{align}
requires only 1 bit of knowledge about $X_n$ (i.e., its sign). If $|X_{n}| \leq C_n$ then 
\begin{align}
 |X_{n+1}|\leq (a/2) C_n +B = C_{n+1}, \label{eq:Xbounded}
\end{align}
and 
\begin{align}
 \limsup_{n \to \infty} |X_n| \leq \frac {B}{1 - a/2}.
\end{align}
Actually, this is the best achievable bound on the uncertainty about the location of $X_n$, as a simple volume-division argument shows \cite{tatikonda2004control,nair2007feedback}. 

When $Z_n$ merely have bounded $\alpha$-moments the above does not work because a single large value of $Z_n$ will cause the system to explode. However we can use the idea of the bounded case with the following modification. Most of the time, in \emph{normal}, or \emph{zoom-in}, mode, the controller assumes the $X_n$ are bounded by constants $C_n$ and forms the control actions according to the above procedure, but occasionally, on a schedule, the quantizer performs a \emph{magnitude test} and sends a bit  whose sole purpose is to inform the controller whether the  $X_n$ is staying within desired bounds. If the test is passed, the controller continues in the normal mode, and otherwise, it enters the \emph{emergency}, or \emph{zoom-out}, mode, whose purpose is to look for the $X_n$ in exponentially larger intervals until it is located, at which point it returns to the zoom-in mode while still occasionally checking for anomalies. We will show that all this can be accomplished with only 1 bit per controller action.



The intuition behind our scheme is the following.  At any given time, with high probability $X_n$ is not too large. Thus, the emergencies are rare, and when they do occur, the size of the uncertainty region tends to decrease exponentially. The zoom-in mode operates almost exactly as in the bounded case, except that we choose $B$ large enough to diminish the probability that the noise exceeds it. We now proceed to making these intuitions precise in \secref{sec:algo}.

\subsection{The Algorithm}
\label{sec:algo}
Here we describe the algorithm precisely and then prove that it works.  Specifically, we consider the setting of \thmref{thm:main} with $a \in [1, 2)$ and $Z_n$ with bounded $\alpha$-moments. 
We find $U_n$ - a function only of the sequence bits received from the quantizer - that achieves $\beta$-moment stability, for $0<\beta<\alpha$.

First we prepare some constants. 
We fix $B \geq 1$ large enough. We set the \emph{probing factor} $P=P(\alpha,\beta)$ - a large positive constant (how large will be explained below, but roughly $P$ blows up as $\beta\uparrow\alpha$).
Fix a small $\delta > 0$ and a large enough $k = k(a)$ so that
 \begin{align}
 ( a/2)^{k - 1} a &\leq 1 - 3 \delta \label{eq:k}.
\end{align} 
We proceed in ``rounds" of at least $k+1$ moves, $k$ moves in normal (zoom-in) mode and $k+1$'th move to test whether $X_n$ escaped the desired bounds. If that magnitude test comes back normal, the round ends; otherwise the controller enters the emergency (zoom-out) mode, whose duration is variable and which ends once the controller learns a new (larger) bound on $X_n$. 
In normal mode, we use the update rule in \eqref{eq:Uibounded}, where $C_n \geq B$ is positive. In the emergency mode, $U_n \equiv 0$ while $C_n$ grows exponentially. 
A precise description of the operation of the algorithm is given below. 

\begin{enumerate} 
\item  At the start of a round at time-step $m$, $|X_m| \leq C_m$, the controller is silent, $U_{m} = 0$, and $X_{m+1} = a X_m + Z_m$. Set 
\begin{align}
C_{m+1} = a C_m + B,  \label{eq:Cmplus1}
\end{align}
and for each $i\in\{2, \ldots ,k\}$,
\begin{align}
C_{m+i}&=\frac { a} 2 C_{m+i-1} + B  \label{eq:Cmplusi} \\
&= \left(a / 2\right)^{i-1} C_{m+1} + \frac{ 1 - ({ a}/2)^{i-1}} {1 - { a}/2} B. \label{eq:Cmplusirec}
\end{align}
In this normal mode operation, the quantizer sends a sequence of signs of $X_n$ (see \figref{fig:a}), while the controller applies the controls \eqref{eq:Uibounded} successively to $X_{m}, \ldots, X_{m+k - 1}$.  This normal mode operation will keep $X_{m+i}$ bounded by $C_{m + i}$
unless some $Z_{m+i}$ is atypically large.
\item The quantizer applies the magnitude test to check whether $|X_{m+k}| \leq C_{m+k}$ (see \figref{fig:b}). If $|X_{m+k}| \leq C_{m+k}$, we return to step 1. If $|X_{m+k}| > C_{m+k}$, this means some $Z_{m+i}$ was abnormally large; the system has blown up and we must do damage control. In this case we enter \emph{emergency (zoom-out) mode} in Step 3 below.

\item In emergency mode, we repeatedly perform silent ($U_{m+k+j} \equiv 0$) magnitude tests via 
\begin{equation}
 C_{m + k +j}= P\, C_{ m + k + j-1} = P^{j} C_{m+k}  \quad j \geq 0
 \label{eq:emergency}
\end{equation}
until the first time $\tau$ that the magnitude test is passed, i.e. 
\begin{equation}
\tau \triangleq \inf \left\{ j \geq 0 \colon |X_{m+k + j}| \leq C_{m+k+j} \right\}. \label{eq:tau}
\end{equation}
We then set $m \leftarrow m + k + \tau$ and return to Step 1.
\end{enumerate}
The controller is silent at the start of a round because it does not know the sign of $X_m$. Each round thus includes one silent step at the start, and $\tau \geq 0$ silent steps of the emergency mode.

\subsection{Overview of the Analysis} 
\label{sec:overview}
We analyze the result of each round. At the start of each round $m$ we know that $X_m$ is contained within interval $[-C_m, C_m]$. We will show that when $C_m$ is large, the uncertainty interval tends to decrease by a constant factor each round. 

At the start of the round,  $|X_{m}| \leq C_{m}$. Assume that for each $i\in \{ 0,1,\ldots,k\}$, we have 
\begin{equation}
\label{eqn:noiseisfine}
|Z_{m+i}|\leq B.
\end{equation}
and thus
\begin{align}
 |X_{m+i}| &\leq C_{m+i}. \label{eq:Xmplusibound}
\end{align}
In particular, applying \eqref{eq:k}, \eqref{eq:Cmplus1} and \eqref{eq:Cmplusi}, we bound the state at the end of the round as
\begin{align}
 |X_{m+k}| &\leq C_{m+k} \label{eq:Xwithinbound}\\
 &\leq (1 - 3 \delta)\, C_{m} + \frac{ B } {1 - { a}/2} \label{eq:Xmpluskbound},
\end{align}
which means that $C_{m+k} \leq C_m$, provided that $C_m \geq \frac{B}{3 \delta (1 - a/2)}$. Thus, even starting with the silent step we have successfully decreased $C_{m}$, provided that it was large enough.

What if \eqref{eqn:noiseisfine} fails to hold? Because the $Z_i$ have bounded $\alpha$-moments, by the union bound and Markov's inequality, the chance \eqref{eqn:noiseisfine} fails is at most 
\begin{equation}
 \Prob{ \cup_{i = 0}^{k} \left\{ {|Z_{m+i}| > B} \right\}} \leq (k+1)\, \E{|Z|^\alpha} B^{-\alpha}.\label{eq:Zimarkov}
\end{equation} 

 In this case, we show that we can control the blow-up to avert a catastrophe. 
 Recall that in emergency mode our procedure will take exponentially growing $C_n$ (see \eqref{eq:emergency}) so that we will soon observe that $|X_n| \leq C_n$. The controller then exits emergency mode and returns to the normal mode, starting a new round at time step $n$. Using boundedness of $\alpha$-moments of $Z_i$, we will show in \secref{sec:precise} below that the chance that on step $n = m+k+j$ this \emph{fails} is exponentially small in $j$. We will see that in each round starting at $X_m \in [-C_m, C_m]$, there is a high chance to shrink the magnitude of the state and a small chance to grow larger. In the next section we explain how to obtain precise moment control.


\subsection{Precise Analysis}
\label{sec:precise}

Here we give details of the analysis outlined in \secref{sec:overview}, demonstrating that when the $Z_n$ are i.i.d. with bounded $\alpha$-moments, our strategy in \secref{sec:algo} yields 
\begin{align}
 \limsup_n \mathbb E[|X_n|^{\beta}]<\infty \label{eq:goal}
\end{align}
for all $0<\beta<\alpha$. 

The following tools will be instrumental in controlling the tails of the accumulated noise.

\begin{prop}
\label{prop:moments}
If the random variable $Z$ has finite $\alpha$-moment, then 
\begin{align}
 t^{\alpha}\mathbb P[|Z|>t] \label{eq:Ztail}
\end{align}
are bounded in $t$. Conversely, if \eqref{eq:Ztail} are bounded in $t$ then $Z$ has a finite $\beta$-moment for any $0<\beta<\alpha$.
\end{prop}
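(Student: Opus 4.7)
The plan is to prove the two directions using standard tools from real analysis: Markov's inequality for the forward direction, and the layer-cake (tail-integral) representation of moments for the converse.

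For the forward direction, I would simply apply Markov's inequality to $|Z|^\alpha$: since $\{|Z|>t\}=\{|Z|^\alpha>t^\alpha\}$, we get
\begin{equation}
t^\alpha \mathbb P[|Z|>t] \;=\; t^\alpha \mathbb P[|Z|^\alpha > t^\alpha] \;\leq\; \mathbb E[|Z|^\alpha],
\end{equation}
which is a finite constant independent of $t$.

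For the converse, suppose $t^\alpha \mathbb P[|Z|>t]\leq K$ for all $t\geq 0$. I would fix $0<\beta<\alpha$ and write the $\beta$-moment using the tail representation
\begin{equation}
\mathbb E[|Z|^\beta] \;=\; \int_0^\infty \beta\, t^{\beta-1}\, \mathbb P[|Z|>t]\, dt.
\end{equation}
Split the integral at $t=1$. On $[0,1]$, bound $\mathbb P[|Z|>t]\leq 1$, so the contribution is at most $\int_0^1 \beta t^{\beta-1}dt = 1$. On $[1,\infty)$, use the hypothesis $\mathbb P[|Z|>t]\leq K t^{-\alpha}$ to obtain
\begin{equation}
\int_1^\infty \beta t^{\beta-1}\mathbb P[|Z|>t]\,dt \;\leq\; \beta K \int_1^\infty t^{\beta-\alpha-1}\,dt \;=\; \frac{\beta K}{\alpha-\beta},
\end{equation}
where the integral converges precisely because $\beta<\alpha$. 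Combining the two pieces yields $\mathbb E[|Z|^\beta]<\infty$.

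There is no real obstacle here; the only subtlety is the strict inequality $\beta<\alpha$, which is exactly what ensures convergence of the tail integral $\int_1^\infty t^{\beta-\alpha-1}dt$. If one tried $\beta=\alpha$, the integral would diverge logarithmically, showing that the strict inequality in the statement is essential and not merely an artifact of the proof.
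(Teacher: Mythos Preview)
Your proof is correct and follows exactly the approach indicated in the paper, which simply names Markov's inequality for the forward direction and the tail-sum formula for the converse without spelling out the details. Your write-up is a faithful expansion of that sketch, including the split at $t=1$ and the observation that $\beta<\alpha$ is needed for convergence.
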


\begin{proof}
The first part is the Markov inequality. The second is a standard use of the tail-sum formula.
\end{proof}

\begin{lemma}

\label{lemma:moments}

Suppose $a>1$ is fixed and $Z_i$ are (arbitrarily coupled) random variables with uniformly bounded absolute $\alpha$ moments. Then the random variables
\begin{equation}
 \tilde Z_{j} \triangleq \sum_{i = 0}^{j} a^{-i} Z_{i}
\end{equation}
also have uniformly bounded absolute $\alpha$-moments.
\end{lemma}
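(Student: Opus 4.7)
The plan is to split into two cases based on whether $\alpha \geq 1$ or $\alpha < 1$, since the triangle inequality for the $L^\alpha$ norm only applies in the former regime. Let $M \triangleq \sup_i \E{|Z_i|^\alpha}$, which is finite by hypothesis. The goal is a bound on $\E{|\tilde Z_j|^\alpha}$ that is independent of $j$, which will follow by recognizing a convergent geometric series in both cases.

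For $\alpha \geq 1$, I would invoke Minkowski's inequality for the $L^\alpha$ norm:
\begin{equation}
\left(\E{|\tilde Z_j|^\alpha}\right)^{1/\alpha} \leq \sum_{i=0}^{j} a^{-i} \left(\E{|Z_i|^\alpha}\right)^{1/\alpha} \leq M^{1/\alpha} \sum_{i=0}^{\infty} a^{-i} = \frac{M^{1/\alpha} a}{a - 1},
\end{equation}
where convergence of the geometric series uses $a > 1$. Raising to the $\alpha$-th power yields a uniform bound.

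For $0 < \alpha < 1$, Minkowski fails, but the map $x \mapsto x^\alpha$ is concave on $[0,\infty)$ with value $0$ at the origin, hence subadditive: $(x+y)^\alpha \leq x^\alpha + y^\alpha$ for $x,y\geq 0$. Iterating this inequality on $|\tilde Z_j| \leq \sum_{i=0}^{j} a^{-i} |Z_i|$ gives
\begin{equation}
|\tilde Z_j|^\alpha \leq \sum_{i=0}^{j} a^{-i\alpha} |Z_i|^\alpha.
\end{equation}
Taking expectations and using $a^\alpha > 1$ (which holds because $a > 1$ and $\alpha > 0$), I get
\begin{equation}
\E{|\tilde Z_j|^\alpha} \leq M \sum_{i=0}^{\infty} a^{-i\alpha} = \frac{M}{1 - a^{-\alpha}},
\end{equation}
again uniformly in $j$.

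The only subtlety is the $\alpha < 1$ regime, where the lack of a true norm structure requires the subadditivity trick rather than a straightforward triangle inequality; this is the one step that is easy to get wrong if one forgets that $\alpha$ is allowed to be small. No independence assumption on the $Z_i$ is used, which matches the ``arbitrarily coupled'' hypothesis.
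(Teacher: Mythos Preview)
Your proof is correct and arguably cleaner than the paper's, but it proceeds by a genuinely different route. You split into the two regimes $\alpha\geq 1$ and $\alpha<1$ and invoke the standard tool in each (Minkowski for the former, subadditivity of $x\mapsto x^\alpha$ for the latter). The paper instead establishes a single elementary inequality valid for all $\alpha>0$,
\[
(x+y)^\alpha \leq c_{\alpha,\varepsilon}\, x^\alpha + (1+\varepsilon)\, y^\alpha,
\]
and iterates it, picking $\varepsilon$ small enough that $(1+\varepsilon)a^{-\alpha}<1$ so the resulting geometric series converges. Your argument is more direct and relies only on textbook inequalities; the paper's has the virtue of being case-free and of producing the auxiliary inequality above, which the paper reuses elsewhere (e.g., in deriving the drift bound \eqref{eq:momentsfinal} and in the proof of \eqref{eq:normalbound}). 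Neither approach needs any independence of the $Z_i$, consistent with the ``arbitrarily coupled'' hypothesis.
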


\begin{proof}
It is easy to see that for any $\alpha > 0,~\varepsilon>0$ there is $c = c_{\alpha,\varepsilon}$ such that for all 

\begin{equation}
 (x+y)^{\alpha}\leq c_{\alpha,\varepsilon}x^{\alpha}+(1+\varepsilon)y^{\alpha} \label{eq:calphaeps}
\end{equation}
holds for all $x, y \geq 0$. Indeed, to see this, assume without loss of generality that $x=1$, and note that when $y$ is sufficiently large we already have 
\begin{equation}
 (1+y)^{\alpha}\leq (1+\varepsilon)y^{\alpha}. \label{eq:calphaepsa}
\end{equation}
The set of $y$ for which \eqref{eq:calphaepsa} does not hold is bounded, hence so is the value of $(1+y)^{\alpha}$; take $c$ to be an upper bound for this expression. The equation will now hold for any value of $y$. 

Applying \eqref{eq:calphaeps} repeatedly yields 
\begin{align}
|\tilde Z_k|^\alpha &\leq \dcol{\\& \notag}{} c|Z_0|^{\alpha} + c \sum_{i = 1}^{k-1} (1 + \epsilon)^{i} a^{-\alpha i} |Z_i|^\alpha + (1+\varepsilon)^ka^{-\alpha k} |Z_k|^\alpha.
\end{align}
Since $\E{|Z_i|^\alpha}$ are uniformly bounded and for $1+\varepsilon<a^\alpha$ the geometric series  $\sum_{i = 1}^{j-1} (1 + \epsilon)^{i} a^{-\alpha i}$ converges, $\mathbb E[|\tilde Z_j|^{\alpha}]$ is bounded uniformly in $j$, as desired.
\end{proof}

\begin{remark}
The mild assumptions of  Lemma~\ref{lemma:moments} will make it easy to generalize our results to dependent noise in \secref{sec:dependent} below.
\end{remark}

The bound in \lemref{lemma:max} below considers the evolution of the system over $k + 1 + \tau$ steps, where $\tau$ \eqref{eq:tau} determines the end of the round. Note that $\tau$ is a stopping time of the filtration generated by $\{X_n\}$.  

\begin{lemma} 
Fix $B, P > 0$ and consider our algorithm described in \secref{sec:algo} with these parameters. Suppose that time-step $m$ is the start of a round, so that the round ends on time-step $m+k+\tau$. For all $1 < a < 2$ and for all $0 \leq j \leq \tau$, it holds that 
\begin{align}
  \label{eq:max}
\dcol{&~}{}
\max\left\{|X_{m+1}|, \ldots, |X_{m+k+j}|, C_{m+k+j}  \right\}
\dcol{\\ \notag}{}
  \leq &~ P a^{k+j}    \left( 2 C_m+  \frac{a\, B}{(2 - a)(a-1)} + \sum_{\ell = 0}^{k+ j - 1} a^{- \ell - 1 } |Z_{m + \ell} | \right),
\end{align}
\label{lemma:max}
\end{lemma}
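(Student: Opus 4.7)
The plan is to unroll the linear dynamics $X_{n+1}=aX_n+Z_n-U_n$ deterministically across the $k+j$ steps of the round, bound the control magnitudes using the algorithm's description, and then plug in the closed-form expressions \eqref{eq:Cmplusirec}, \eqref{eq:emergency}. The proof naturally splits into two parts: first bounding $|X_{m+i}|$ for each $i\leq k+j$, and then bounding the scale $C_{m+k+j}$.

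For the state bound, iterating the dynamics gives
\[
|X_{m+i}|\leq a^i|X_m|+\sum_{\ell=0}^{i-1}a^{i-1-\ell}\bigl(|Z_{m+\ell}|+|U_{m+\ell}|\bigr).
\]
I would use $|X_m|\leq C_m$, the fact that $|U_{m+\ell}|\leq(a/2)C_{m+\ell}$ only for the normal-mode indices $\ell=1,\ldots,k-1$ (with $|U_{m+\ell}|=0$ for $\ell=0$ and for $\ell\geq k$), and the closed-form bound $C_{m+\ell}\leq a(a/2)^{\ell-1}C_m+\frac{2B}{2-a}$ implied by \eqref{eq:Cmplusirec}. Factoring out $a^{k+j}$ (using $a^i\leq a^{k+j}$) produces three geometric series. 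The $C_m$-coefficient collapses via the pleasing identity $(a^2/2)\cdot a^{-\ell-1}(a/2)^{\ell-1}=2^{-\ell}$, so $\sum_{\ell=1}^{k-1}2^{-\ell}\leq 1$, and combined with $|X_m|\leq C_m$ this yields the $2C_m$. The $B$-coefficient is bounded by a geometric sum in $a^{-1}$, giving at most $\frac{B}{(2-a)(a-1)}\leq\frac{aB}{(2-a)(a-1)}$ since $a\geq 1$. The $Z$-terms produce exactly $\sum_{\ell=0}^{k+j-1}a^{-\ell-1}|Z_{m+\ell}|$ after pulling out $a^{k+j}$.

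The main obstacle is bounding $C_{m+k+j}$: since $C_{m+k+j}=P^jC_{m+k}$ grows at rate $P$ (chosen large, in particular $P>a$), one cannot fit this under a single factor of $P$ times $a^{k+j}$ without exploiting that $j\leq\tau$. The key observation is that for $j\geq 1$ we have $j-1<\tau$, so by the definition \eqref{eq:tau} the magnitude test must have failed at step $m+k+j-1$, i.e.\ $|X_{m+k+j-1}|>C_{m+k+j-1}$, which gives
\[
C_{m+k+j}=P\,C_{m+k+j-1}<P\,|X_{m+k+j-1}|.
\]
Applying the bound from the first part to $|X_{m+k+j-1}|$ (and noting that the RHS is monotone in $j$) then yields $C_{m+k+j}\leq Pa^{k+j}R(j)$, where $R(j)$ denotes the parenthesised factor. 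For the base case $j=0$, the normal-mode estimate $C_{m+k}\leq a(a/2)^{k-1}C_m+\frac{2B}{2-a}$ is dominated by $Pa^k\bigl(2C_m+\frac{aB}{(2-a)(a-1)}\bigr)$ once $P$ is chosen large enough (specifically $P\geq 2/a$ suffices). Taking the maximum over $i$ and over $C_{m+k+j}$ yields the stated inequality.
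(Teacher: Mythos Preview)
Your proof is correct and follows essentially the same route as the paper. Both arguments unroll the dynamics to bound $|X_{m+i}|$ by $a^{i}$ times a sum of $|X_m|$, the accumulated controls, and the noise; bound the control contribution via the closed form \eqref{eq:Cmplusirec} and a geometric series (yielding the $2C_m$ and the $B/((2-a)(a-1))$-type constant); and---this is the key idea---control $C_{m+k+j}$ for $j\ge 1$ by observing that the magnitude test failed at step $m+k+j-1$, whence $C_{m+k+j}=P\,C_{m+k+j-1}<P\,|X_{m+k+j-1}|$, feeding the state bound back in. The base case $j=0$ is handled directly from the normal-mode recursion in both proofs. The only cosmetic differences are that the paper organizes the geometric series around $C_{m+1}$ rather than $C_m$ (obtaining the slightly sharper constant $B/(a-1)$ for the state bound, which then combines with the extra $2B/(2-a)$ from the $j=0$ case to give exactly $aB/((2-a)(a-1))$), and that the paper's $j=0$ step needs only $P\geq 1$ rather than your $P\geq 2/a$; neither distinction affects the argument.
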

\begin{proof}
Appendix. 
\end{proof}

\begin{proof}[Proof of \thmref{thm:main} for the case $a \in [1, 2)$]
To avoid a special treatment of the case $a = 1$, we assume that $a > 1$. This is without loss because showing stability for $a$ implies stability for all $a^\prime \leq a$.
First we prepare some constants. Recall the choices of $k$ and $\delta$ in \eqref{eq:k}.  
\begin{itemize}
\item Fix $\Delta<\alpha-\beta$ an arbitrary fixed constant, e.g. $\Delta=\frac{\alpha-\beta}{3}$, so that 
\begin{equation}
\beta = \alpha - 3 \Delta. \label{eq:betaalpha}
\end{equation}

\item Fix $P$ large enough so that
\begin{align}
 P/a \geq \max\left\{ \left(\frac{a}{1 - \delta}\right)^{\alpha - \Delta},~ 2^{k}, ~  \frac{a^{k+1}}{2(a - 1)} \right\} \label{eq:Pchoice}.
\end{align}
\end{itemize}

Suppose that time-step $m$ is the start of a round, so that the round ends on time-step $m+k+\tau$, with stopping time $\tau=0$ usually.

We define a modified sequence\footnote{\ver{$\tilde X_n$ serves as a Lyapunov function that stochastically
controls the growth of the state process $X_n$. See \cite[Th. 2.1]{yuksel2012drift}, \cite{yuksel2013stochastic} for a general approach to proving stability using Lyapunov functions for general Markov chains. } }
$\tilde X_n$ through, for $1 \leq i \leq k+\tau$, 
\begin{align}
\label{eq:Xtilde}
\tilde X_{m+i} \triangleq&~ \left( \frac 1 {1 - \delta}\right)^{ \tau-|i - k|_+}
\dcol{\\\notag&~}{} 
\max\left\{|X_{m+k}|, \ldots, |X_{m+k+\tau}|, C_{m+k+\tau}  \right\},
\end{align} 
where $|\cdot |_+ \triangleq \max\{0, \cdot\}$.
Clearly this definition ensures that 
\begin{align}
|X_{m+k+j}|  \leq \tilde X_{m+k+j} \quad  0 \leq j \leq \tau.
 \label{eq:bound}
\end{align}
Furthermore, for all $1 \leq i \leq k - 1$, there exists universal constants $K_1, K_2, K_3$ that depend on $a$, $k$ and $B$ such that (Appendix~\ref{sec:proofnormalbound})
\begin{equation}
\E{|X_{m+i}|^\beta} \leq K_1\, \E{ \tilde X_{m+k}^\beta} + K_2\, \E{  \tilde X_{m}^\beta} + K_3 \label{eq:normalbound}.
\end{equation}
Inequalities \eqref{eq:bound} and \eqref{eq:normalbound} together mean that to establish \eqref{eq:goal}, it is sufficient to prove
\begin{equation}
 \limsup_n \mathbb E[\tilde X_{n}^{\beta}] < \infty \label{eq:tildebound}.
\end{equation}
The rest of the proof is focused in establishing \eqref{eq:tildebound}.

By definition \eqref{eq:Xtilde},  
\begin{align}
 \tilde X_{m+i} \leq \tilde X_{m+1} \quad i = 2, \ldots, k + \tau, \label{eq:onlyfirstmatters}
\end{align}
with equality for $i \leq k$.

We will show that
\begin{align}\label{eq:momentdecrease}
\mathbb E[\tilde X_{m+1}^{\beta}] &\leq (1-\delta)^{\beta}\mathbb E[\tilde X_{m}^{\beta}]+K,
\end{align} 
where $K = K(P, k, \delta)$ is a constant that may depend on $P, k, \delta$ (but is independent of $m$). Together, inequalities \eqref{eq:onlyfirstmatters}  and \eqref{eq:momentdecrease} ensure that $\limsup_n \mathbb E[\tilde X_{n}^{\beta}]$ is bounded above by $\frac {K}{1 - (1-\delta)^{\beta}}$.

The intuition behind the definition for $\tilde X_n$ is as follows. We want to construct a dominating sequence $\tilde X_n$ with the expected decrease property in \eqref{eq:momentdecrease}. During emergency mode, the original sequence $X_n$ may increase on average during rounds. The sequence $\tilde X_n$ in \eqref{eq:Xtilde} takes the potential increase during each round up front, achieving the desired expected decrease property. We will see that $P$ in \eqref{eq:Pchoice} is chosen so that the constant-factor decrease of the system is preserved when switching between rounds.

To show \eqref{eq:momentdecrease}, we define the filtration $\mathcal F_n$ as follows: $\mathcal F_n$ is the $\sigma$-algebra generated by the sequences $Z_1, Z_2, \ldots, Z_{n-1}$ and $\tilde X_1, \tilde X_2, \ldots, \tilde X_n$. Unless $n$ is the end of a round, knowledge of $\tilde X_n$ involves a peek into the future, so $\mathcal F_n$ encompasses slightly more information than the naive notion of ``information up to time $n$". The inequality we will show, clearly stronger than \eqref{eq:momentdecrease}, is 
\begin{align}\label{eq:momentdecrease2}\mathbb E[\tilde X_{m+1}^{\beta} \mid \mathcal F_m ]\leq (1-\delta)^{\beta}\mathbb E[\tilde X_m^{\beta} \mid \mathcal F_m ]+K.
 \end{align}

Define
\begin{equation}
 Y_n \triangleq \frac{\tilde X_{n+1}}{\tilde X_{n}+ \frac{B}{(1 - a/2) (1 - 3 \delta)} }. \label{eq:Yn}
\end{equation}

We will show \eqref{eq:momentdecrease2} by the means of the following two statements, where $m$ is the transition between rounds: 
\begin{enumerate}[(a)]

 \item For sufficiently large $k$ and $P$ in \eqref{eq:k} and \eqref{eq:Pchoice}, respectively, it holds that 
 \footnote{Throughout this section, the implicit constants $\bigo{\cdot}$ may depend on $P, k, \delta$ (but are independent of $n$ and $B \geq 1$). }
\begin{equation}
 \mathbb P\left[Y_m \geq t | \mathcal F_{m} \right]  = \bigo{ t^{-(\alpha-\Delta)} },  \label{eq:Y_itail}
\end{equation} 
\item As $B \to \infty$, 
\begin{equation}
\Prob{Y_m \leq 1 - 3 \delta \mid \mathcal F_m } \to 1.\label{eq:Y_good}
\end{equation}
\end{enumerate}

We use \eqref{eq:Y_itail} and \eqref{eq:Y_good} to show \eqref{eq:momentdecrease2} as follows. First, observe that by \eqref{eq:Y_itail} and Proposition~\ref{prop:moments}, $\{Y_m | \mathcal F_m\}$ has bounded $\beta + \Delta$ - moment since we assumed \eqref{eq:betaalpha} when choosing $\Delta$. Furthermore, since the right side of \eqref{eq:Y_itail} is independent of $\mathcal F_m$,  the $\beta + \Delta$ - moment of $Y_m$ is bounded uniformly in $m$. Now, pick $p > 1$   so that $\beta p \leq \beta + \Delta$, and let $q$ satisfy $\frac 1 p + \frac 1 q = 1$. Write
\begin{align}
 \dcol{&~}{}
 \E{Y_m^\beta \mid \mathcal F_m} \dcol{\notag\\}{}
 \leq&~
 (1 - 3\delta)^\beta +
 \E{  Y_m^\beta\, \1{ Y_m >  1 - 3\delta } \mid \mathcal F_m } \\ 
 \leq&~    (1 - 3\delta)^\beta + \left(  \E{  Y_m^{\beta p} \mid \mathcal F_m } \right)^{\frac 1 p} \left( \Prob{ Y_m >  1 - 3\delta \mid \mathcal F_m } \right)^{\frac 1 q} \label{eq:expbounda}\\
 \to&~    (1 - 3\delta)^\beta, \quad B \to \infty 
 \label{eq:expboundb}, 
\end{align}
where \eqref{eq:expbounda} is by H\"older's inequality, and the second term in \eqref{eq:expbounda} vanishes  as $B \to \infty$ due to \eqref{eq:Y_good} and uniform boundedness of the $\beta + \Delta$ - moment of $\{Y_m \mid \mathcal F_m\}$. Note that convergence in \eqref{eq:expboundb} is uniform in $m$. It follows that for a large enough $B$ (how large depends on the values of $P, k, \delta$), 
\begin{equation}
 \E{Y_m^\beta \mid \mathcal F_m} \leq (1 - 2 \delta)^\beta. \label{eq:Yncondexp}
\end{equation}

Rewriting \eqref{eq:Yncondexp} using \eqref{eq:Yn} yields
\begin{align}
 \mathbb E[\tilde X_{m+1}^{\beta} \mid \mathcal F_m] &\leq (1 - 2 \delta)^\beta \left( \tilde X_m +\frac{B}{(1 - a/2) (1 - 3 \delta)}\right)^{\beta}\\
 &\leq (1 - \delta)^\beta \tilde X_m^{\beta}+K, \label{eq:momentsfinal}
\end{align}
where to write \eqref{eq:momentsfinal} we used \eqref{eq:calphaeps}. This establishes the inequality~\eqref{eq:momentdecrease2}.
 
To complete the proof of Theorem~\ref{thm:main}, it remains to establish \eqref{eq:Y_itail} and \eqref{eq:Y_good}.

To show \eqref{eq:Y_itail}, recall that the round ends at stopping time $m + k +  \tau$. 
Since the events $\{\tau = j\}$ are disjoint, we have
\begin{align}
 \mathbb P\left[Y_m\geq t | \mathcal F_m\right] &=  \sum_{j = 0}^\infty \mathbb P[Y_m\geq t, \tau = j | \mathcal F_m]  \dcol{\notag\\ &\phantom{=}}{} + \mathbb P[Y_m\geq t, \tau = \infty | \mathcal F_m] \label{eq:tailsum}
\end{align}

Note that since $m$ is the end of the previous round, $\mathcal F_m$ does not contain any information about the future. 

We estimate the probability of the event in $\mathbb P[Y_m\geq t, \tau = j | \mathcal F_m]$ in two ways, and use the better estimate on each term individually.

We express the system state at time $m + i$ in terms of the system state at time $m$:
 \begin{align} 
X_{m+i} &=
 a^{i} \left( X_m + \sum_{\ell = 0}^{i-1} a^{- \ell - 1} U_{m+ \ell}
+ \sum_{\ell = 0}^{i-1} a^{- \ell - 1 } Z_{m + \ell} \right). \label{eq:Xmj}
\end{align}

Using \eqref{eq:Uibounded}, \eqref{eq:Cmplus1}, \eqref{eq:Cmplusi} and recalling that $U_m = 0$, we can crudely bound the cumulative effect of controls on $X_{m+i}$ as
 \begin{align}
a^{i} \left| \sum_{\ell = 0}^{k-1} a^{ - \ell - 1} U_{m+\ell} \right|
&\leq a^{i} \left( a/2 \right)  \sum_{\ell = 1}^{\infty} a^{ - \ell - 1} 
 \dcol{\\ &\phantom{=}}{}
 \left( \left( a/2 \right)^{\ell- 1} C_{m+1} +   \frac{ 1 - ({ a}/2)^{\ell-1}} {1 - { a}/2} B \right) 
 \dcol{\notag}{}
 \\ 
&= 
a^{i} \left(C_{m}  + \frac B { a-1 } \right).
\label{eq:controlnormal} 
\end{align}

 Recalling the definitions of $\tilde X_n$, $Y_n$ in \eqref{eq:Xtilde}, \eqref{eq:Yn}, respectively, and invoking \lemref{lemma:max}, we see that if $\{Y_m \geq t, \tau = j\}$ holds, then
\begin{align}
 &~ t (1 - \delta)^{k + j - 1} \left( \tilde X_{m}+ \frac{B}{(1 - a/2) (1 - 3 \delta)} \right) \label{eq:eventmax}
  \dcol{\\ \leq&~ \notag}{\leq} 
  P a^{k+j}   \left( 2 C_m+  \frac{a B}{ (2-a) (a-1) } +  \sum_{\ell = 0}^{k+ j - 1} a^{- \ell - 1 } |Z_{m + \ell} |  \right).  
\end{align}
Noting that both $C_m$ and $\frac{a B}{ 2-a}$ are dominated by $\tilde X_{m}+ \frac{B}{(1 - a/2) (1 - 3 \delta)} \geq 1$, we can weaken \eqref{eq:eventmax} as 
\begin{align}
\dcol{\!\!\!}{}
 t (1 - \delta)^{k + j - 1} \leq P a^{k+j}     \left( 2 + \frac 1 {a-1} + \sum_{\ell = 0}^{k+ j - 1} a^{- \ell - 1 } |Z_{m + \ell} | \right). \label{eq:onehand}
\end{align}
Applying \lemref{lemma:moments} and \propref{prop:moments}, we deduce that the probability of the event in \eqref{eq:onehand} is 
\begin{align}
 O\left(  \left(\frac{a}{1-\delta}\right)^{\alpha j}t^{-\alpha}\right). \label{eq:prob2}
\end{align}

The bound in \eqref{eq:prob2} works well for small $j$ / large $t$. For large $j$ / small $t$, we observe that $\{Y_m \geq t, \tau = j\} \subseteq \{\tau \geq j\} $ and apply the following reasoning. The event $\{\tau \geq j\}$ means that the emergency did not end at time $j$; in other words, 
\begin{align}
 |X_{m+k+j-1}| &> C_{m+k+j-1} \label{eq:emergencydidnotend}
 \\
 &=  P^j \left( 2 \left(a/2 \right)^{k}  C_{m} + \frac{B}{1 - { a}/2} \right), \label{eq:emergencydidnotend3}
\end{align}
where to write \eqref{eq:emergencydidnotend3} we used \eqref{eq:Cmplus1}, \eqref{eq:Cmplusirec}, and \eqref{eq:emergency}. 
Substituting $i \leftarrow  k + j$ into \eqref{eq:Xmj} and recalling \eqref{eq:controlnormal} and $|X_m| \leq C_m$, we weaken \eqref{eq:emergencydidnotend}--\eqref{eq:emergencydidnotend3} as
\begin{align}
& \phantom{=} a^{k + j} \bigg(  2C_m + \frac{a B}{ (2-a) (a-1) } +  \sum_{\ell = 0}^{k+ j - 1} a^{- \ell - 1 } |Z_{m + \ell} |  \bigg)
\dcol{\notag\\ &}{}
>
 P^j \left( 2 \left(a/2 \right)^{k}  C_{m} + \frac{B}{1 - { a}/2} \right),
\end{align}
the event equivalent to 
\begin{align}
& (a/P)^j  \sum_{\ell = 0}^{k+ j - 1} a^{- \ell - 1 } |Z_{m + \ell} |  \geq   2 \left( (1/2)^{k}  -   (a/P)^{j} \right) C_{m} 
\dcol{\notag\\  & }{}
+ \left( (1/a)^k  -\frac {a (a/P)^j }{2(a - 1)} \right) \frac{B}{1 - a/2}.  \label{eq:otherhand}
\end{align}
Due to the choice of $P$ in \eqref{eq:Pchoice}, the coefficients in front of $C_m$ and $B$ in the right side of \eqref{eq:otherhand} are nonnegative for all $j \geq 1$. Bounding the probability of the event in \eqref{eq:otherhand} using \lemref{lemma:moments} and \propref{prop:moments}, we conclude that \footnote{\ver{Similar exponential bounds to the event $\Prob{\tau \geq j}$ are provided in  \cite[Lem. 5.2]{yuksel2012drift} and in \cite[Lem. 5.2]{yuksel2016nonlinear}}. }
\begin{equation}
\Prob{\tau \geq j} = O\left( \left( P/a\right)^{- j \alpha} \right). \label{eq:prob1}
\end{equation}
Furthermore, \eqref{eq:prob1} means that $\Prob{\tau = \infty} = 0$. Indeed, $1\{\tau = \infty\} =  \prod_{j = 0}^\infty \1{\tau \geq j } =\lim_{j \to \infty}\1{\tau \geq j}$ and by Fatou's lemma, 
\begin{align}
\Prob{\tau = \infty} \leq \lim_{j \to \infty} \Prob{ \tau \geq j } = 0, 
\end{align}
thus the corresponding term can be eliminated from \eqref{eq:tailsum}. 

Juxtaposing \eqref{eq:prob2} and  \eqref{eq:prob1}, we conclude that the probability $\mathbb P[Y_m\geq t, \tau = j | \mathcal F_m]$ is bounded by  
\begin{equation}
O\left( \min\left\{ \left(\frac{a}{1-\delta}\right)^{\alpha j}t^{-\alpha},  \left( P/a\right)^{- j \alpha}  \right\} \right). \label{eq:omin}
\end{equation}
Since \eqref{eq:Pchoice} ensures that $(P/a)^{\Delta}\geq \left(\frac{a}{1-\delta}\right)^{\alpha}$, 
 we weaken \eqref{eq:omin} as 
\begin{equation}
O\left( \left( P/a\right)^{ j \Delta}  \min\left\{ t^{-\alpha},  \left( P/a\right)^{- j \alpha}  \right\} \right). 
\end{equation}

Recall that we have fixed $t$ and are varying $j$; this upper bound peaks at $j$ such that $(P/a)^j=t$ at the value $t^{-(\alpha-\Delta)}$ and decays geometrically on each side at rates $(P/a)^{\Delta}$ and $(P/a)^{\alpha-\Delta}$. Hence the sum of all $\mathbb P[Y_m\geq t, \tau = j | \mathcal F_m]$ terms in \eqref{eq:tailsum} is bounded by the maximum up to a constant factor and therefore \eqref{eq:Y_itail} holds. 

To complete the proof of \thmref{thm:main}, it remains to establish \eqref{eq:Y_good}. By Markov's inequality \eqref{eq:Zimarkov}, with probability converging to $1$ as $B \to \infty$, all terms $Z_m, \ldots, Z_{m+k}$ are within $[-B, B]$, and $\tau = 0$. In such a case, applying \eqref{eq:Xmpluskbound} and recalling \eqref{eq:Xtilde}, we get
\begin{align}
\tilde X_{m+1} &= \max\{ |X_{m+k}|, C_{m+k} \} \\
&\leq \left( 1 - 3 \delta \right) \tilde X_m+ \frac{B}{1 - a/2},   \label{eq:normalshrink}
\end{align}
which implies that $Y_m \leq 1 - 3 \delta$, establishing \eqref{eq:Y_good}. 

\end{proof}

\subsection{Finer Quantization}
\label{sec:finer}
For $a\geq 2$, the controller receives an element of an $\lfloor a\rfloor+1$-element set instead of a single bit. In this case we restrict our attention to \emph{order-statistic} tests, meaning that we split the real line into $\lfloor a \rfloor +1$ intervals 
\begin{equation}
 (-\infty,w_{1,n}),[w_{1,n},w_{2,n}),\dots, [w_{\lfloor a\rfloor,n},\infty),
\end{equation}
and the controller receives the index $b_n\in\{0,1,\dots,\lfloor a\rfloor\}$ of the interval containing $X_n$. The only real issue is for the quantizer and the controller to agree upon a rule for \ver{updating} the values of $w_i$. However, this is easy; in the obvious generalization of our algorithm to higher $a$, \ver{ the (uniform) quantizer simply breaks up the interval $[-C_n,\,C_n]$ into $\lfloor a \rfloor+ 1$ equal parts, where $C_n$ is the same bound on the state magnitude as before. Both quantizer and controller follow the rules in \eqref{eq:Cmplusi} (with $a/2$ replaced by $a / (\lfloor a\rfloor+1)$ and in \eqref{eq:emergency} to update $C_n$. During the normal mode, the controller applies the control 
\begin{align}
 U_n = -C_n + C_n \frac {2 b_n +1 } {\lfloor a \rfloor +1} \label{eq:Uiboundeda}
\end{align}
which reduces to \eqref{eq:Uibounded} when $\lfloor a \rfloor = 1$.}

In the case $a <1$, the controller does nothing, which by Lemma~\ref{lemma:moments} achieves $\beta$-moment stability.

\section{Converse}
\label{sec:converse}
In this section, we prove the converse result in \thmref{thm:mainc} using information-theoretic arguments similar to those employed in \cite{nair2004stabilizability,kostina2016ratecost}. Then, we  use elementary probability to show an alternative converse result, which implies  \thmref{thm:mainc} unless $a$ is an integer.
\begin{proof}[Proof of \thmref{thm:mainc}]
Conditional entropy power is defined as
\begin{equation}
 N(X|U) \triangleq \frac 1 {2 \pi e} \exp\left( 2 h(X|U) \right)
\end{equation}
where $h(X|U) = -\int_{\mathbb R} f_{X, U}(x, u) \log f_{X|U = u}(x) dx$ is the conditional differential entropy of $X$. 

Conditional entropy power is bounded above in terms of moments  (e.g. \cite[Appendix 2]{zamir1992universal}): 
\begin{align}
N(X) &\leq \kappa_\beta \E{ |X |^{\beta}}^{\frac 2 \beta} \\
\kappa_\beta &\triangleq \frac 2 { \pi e} \left(e^{\frac 1 \beta} \Gamma \left( 1 + \frac 1 \beta \right) \beta^{\frac 1 \beta}\right)^2,
\end{align}
Thus, 
\begin{align}
\kappa_\beta \E{ |X_n |^{\beta}}^{\frac 2 \beta} &\geq N\left( X_n \right)\\
 &\geq N\left( X_{n} | U^{n-1}\right) \label{eq:bi},
\end{align}
where \eqref{eq:bi} holds because conditioning reduces entropy. Next, we show a recursion on $N\left( X_{n} | U^{n-1}\right)$:
\begin{align}
N\left( X_{n} | U^{n-1}\right) &= N(\mathsf A X_{n-1} + {Z}_{n-1} | U^{n-1}) \\
 &\geq a^2 N(X_{n-1} | U^{n-1} ) + N({Z}_{n-1})\label{eq:cb}\\
 &\geq a^2 N(X_{n-1} | U^{n-2} ) \exp \left( - 2 r  \right)  + N({Z}_{n-1}),  
 \label{eq:recur}
\end{align}
where \eqref{eq:cb} is due to the conditional entropy power inequality:\footnote{Conditional EPI follows by convexity from the unconditional EPI first  stated by Shannon \cite{shannon1948mathematical} and proved by Stam \cite{stam1959some}.}
\begin{align}
N(X + Y | U)\geq N(X|U) + N(Y|U) \label{eq:epi},
\end{align} 
which holds as long as $X$ and $Y$ are conditionally independent given $U$, and \eqref{eq:recur} is obtained by weakening the constraint $|U_{n-1}| \leq M$ to a mutual information constraint $I(X_{n-1}; U_{n-1} | U^{n-2}) \leq \log M = r$  
 and observing that 
\begin{align}
\min_{P_{U|X} \colon I(X; U) \leq r} h(X|U) &\geq h(X) - r. 
\end{align}  
It follows from \eqref{eq:recur} that $r > \log a$ is necessary to keep $N\left( X_{n} | U^{n-1}\right)$ bounded. Due to \eqref{eq:bi}, it is also necessary to keep $\beta$-th moment of $X_n$ bounded. 
\end{proof}

Consider the following notion of stability.\footnote{\ver{A more stringent to \defnref{def:weakstable} notion of stability in probability, in which $>0$ in the right side of \eqref{eq:stabprob} is replaced by $ = 1$, was considered in \cite[Def. 2.1]{matveev2008state}, and in \cite[Th. 3.1]{yuksel2016nonlinear}. }}
\begin{defn}
 The system is stabilizable in probability if there exists a control strategy such that for some bounded interval $\mathcal I$, 
\begin{equation}
 \limsup_{n \to \infty} \Prob{X_n \in \mathcal I} > 0. \label{eq:stabprob}
\end{equation}
\label{def:weakstable}
\end{defn}


As a simple consequence of Markov's inequality, if the system is moment-stable, it is also stable in probability. Therefore the following converse for stability in probability implies a converse for moment stability. 

\begin{thm}
Assume that $X_1$ has a density. To achieve stability in probability, $M \geq \lceil a \rceil$ is necessary. 
\label{thm:weak}
\end{thm}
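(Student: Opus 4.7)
The plan is to exploit the fact that with only $M$ messages per time step, after $n$ steps there are at most $M^n$ possible control-action histories, yet the dynamics expand volume by a factor of $a$ at each step. The argument will be that if $M < \lceil a \rceil$, equivalently (since $M$ is an integer and $a \geq 1$) $M < a$, then for any fixed bounded interval $\mathcal I$, $\Prob{X_n \in \mathcal I} \to 0$, contradicting Definition~\ref{def:weakstable}.

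First I would fix a noise realization $Z_1^{n-1} = z$ and study the map $X_1 \mapsto X_n$. Unrolling \eqref{eq:systemscalar} gives $X_n = a^{n-1} X_1 + \sum_{i=1}^{n-1} a^{n-1-i}(z_i - U_i)$, and since each $U_i$ is a function of the received messages $m_1, \ldots, m_i$, which are themselves functions of $X_1$ and $z$ taking at most $M$ values each, the real line partitions into at most $M^n$ Borel sets $A_{m_1^n}(z)$ on which the controls and hence the entire affine term are constant. Therefore, conditional on $z$, the map $X_1 \mapsto X_n$ is piecewise affine with slope exactly $a^{n-1}$ on each piece, so the preimage of $\mathcal I$ is a union of at most $M^n$ sets each of Lebesgue measure at most $|\mathcal I|/a^{n-1}$. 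Hence
\begin{equation}
\mathrm{Leb}\bigl(\{x_1 : X_n(x_1, z) \in \mathcal I\}\bigr) \;\leq\; M^n\, |\mathcal I|\, a^{-(n-1)} \;=\; a\,|\mathcal I|\, (M/a)^n.
\end{equation}

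The main obstacle is that the density $f_{X_1}$ need not be bounded, so I cannot directly convert a Lebesgue-measure bound into a probability bound. I would handle this by truncation: for any $\varepsilon > 0$, pick $K = K(\varepsilon)$ large enough that the set $E \triangleq \{x : f_{X_1}(x) \leq K\}$ satisfies $\Prob{X_1 \notin E} < \varepsilon$ (such a $K$ exists because $f_{X_1}$ is integrable, so $\Prob{f_{X_1}(X_1) > K} \to 0$ as $K \to \infty$). Splitting $\Prob{X_n \in \mathcal I} = \Prob{X_n \in \mathcal I, X_1 \in E} + \Prob{X_n \in \mathcal I, X_1 \notin E}$, the second term is at most $\varepsilon$, while conditioning on $Z_1^{n-1}$ and using the Lebesgue bound on $E$ where $f_{X_1} \leq K$ gives
\begin{equation}
\Prob{X_n \in \mathcal I,\, X_1 \in E \mid Z_1^{n-1} = z} \;\leq\; K \cdot a\,|\mathcal I|\,(M/a)^n,
\end{equation}
and integrating over $z$ preserves the bound.

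Combining, $\Prob{X_n \in \mathcal I} \leq \varepsilon + K\, a\, |\mathcal I|\, (M/a)^n$. If $M < \lceil a \rceil$, then $M/a < 1$, so the second term vanishes as $n \to \infty$ and $\limsup_n \Prob{X_n \in \mathcal I} \leq \varepsilon$. Since $\varepsilon > 0$ is arbitrary, $\limsup_n \Prob{X_n \in \mathcal I} = 0$, contradicting \eqref{eq:stabprob}. Hence $M \geq \lceil a \rceil$, completing the proof. The only genuinely delicate point is verifying that the cells $A_{m_1^n}(z)$ are measurable and that the affine-with-slope-$a^{n-1}$ structure holds almost surely in $z$; both follow by induction on $n$ from the measurability of the encoding functions $\mathsf f_i$ and the definition of the recursion.
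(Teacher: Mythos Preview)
Your proof is correct and follows essentially the same route as the paper's: both exploit that, once the noise is fixed, the map $X_1\mapsto X_n$ is piecewise affine with slope $a^{n-1}$ on at most $M^n$ pieces, giving the Lebesgue bound $M^n|\mathcal I|/a^{n-1}$, and both handle unbounded $f_{X_1}$ by truncating to $\{f_{X_1}\le K\}$. The only cosmetic difference is that the paper fixes the noise by \emph{revealing it noncausally to both parties} (a relaxation that lets the controller cancel it and reduce to $Z_n\equiv 0$), whereas you simply \emph{condition} on the realization $Z_1^{n-1}=z$ and integrate at the end; your formulation is arguably more direct and avoids the extra relaxation step, but the underlying counting-plus-volume argument is identical.
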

\begin{proof}
We want to show that for any bounded interval $\mathcal I$, if $r < \log a$ then
\begin{equation}
 \limsup_{n \to \infty} \Prob{X_n \in \mathcal I} = 0.  \label{eq:noweakstability}
\end{equation}
At first we assume that the density of $X_1$ is bounded, that is, $|f_{X_1}(x)| \leq f_{\max}$ and that $X_1$ is supported on a finite interval, i.e. $|X_1| \leq x_{\max}$, for some constants $f_{\max}, x_{\max}$. 

 Since we are showing a converse (impossibility) result, we may relax the operational constraints by revealing the noises $Z_n,~n = 1, 2, \ldots$ noncausally to both encoder and decoder. Since then the controller can simply subtract the effect of the noise, we may put $Z_n \equiv 0$ in \eqref{eq:systemscalar}.  Then, $X_{n+1} = a^n X_1 + \tilde U_n$, where $\tilde U_n \triangleq \sum_{i = 0}^n a^{n - i} U_i$ is the combined effect of $t$ controls, which can take one of $M^n$ values, i.e. $U_n = u(m)$ if $X_1 \in \mathcal I_m$, $m = 1, \ldots, M^n$. Regardless of the particular choice of control actions $u(m)$ and quantization intervals $\mathcal I_m$, for any bounded interval $\mathcal I$,
\begin{align}
 \Prob{X_{n+1} \in \mathcal I} &= \Prob{a^n X_1 + \tilde U_n  \in \mathcal I} \label{eq:leba}\\
 &= \sum_{m = 1}^{M^n} \Prob{a^n X_1 + u(m) \in \mathcal I, X_1 \in \mathcal I_m} \\
 &\leq M^n a^{-n} f_{\max} |\mathcal I |, \label{eq:lebb}
\end{align}
and \eqref{eq:noweakstability} follows for any $M < a$, confirming the necessity of $M \geq \lceil a \rceil$ to achieve weak stability.\footnote{\ver{An argument similar to \eqref{eq:leba}--\eqref{eq:lebb} showing that the Lebesgue measure of $\mathcal I$ cannot be sustained is the key to the data-rate theorems for invariance entropy \cite{colonius2009invariance,colonius2013note,kawan2015network,dasilva2018robustness}.
}}

Finally, if the density of $X_1$ is unbounded,  consider the set  $\mathcal S_b \triangleq \left\{ x \in \mathbb R \colon f_{X_1}(x) \leq b\right\}$ and notice that since $1\{ f_{X_1}(x) > b\} \to 0$ pointwise as $b \to \infty$, by dominated convergence theorem,
\begin{align}
\Prob{X_1 \in \mathcal S_b} &= \int_{\mathbb R} f_{X_1}(x) 1\{ f_{X_1}(x) > b\} \mathrm d x 
\to 0 \text{ as } b \to \infty. 
\end{align}
Therefore for any $\epsilon > 0$, one can pick $b > 0$ such that $\Prob{X_1 \in \mathcal S_b} \leq \epsilon$. Then, since we already proved that \eqref{eq:noweakstability} holds for bounded $f_{X_1}$, we conclude
\begin{equation}
\limsup_{n \to \infty} \Prob{X_n \in \mathcal I} \leq  \epsilon + \limsup_{n \to \infty}  \Prob{X_n \in \mathcal I | X_1 \in \mathcal S_b}  = \epsilon,
\end{equation}
which implies that \eqref{eq:noweakstability} continues to hold for unbounded $f_{X_1}$.
 \end{proof}

The quantities $\lceil a \rceil$ and $\lfloor a\rfloor +1$ coincide unless $a$ is an integer, thus \thmref{thm:weak} shows that for non-integer $a$, the converse (impossibility) part of \thmref{thm:main} continues to hold in the sense of weak stability. Note that the proof of \thmref{thm:weak} relaxes the causality requirement. We conjecture that $\lceil a \rceil$ can be replaced by $\lfloor a \rfloor+1$ in its statement, but proving that will require bringing causality back in the picture, and the simple argument in the proof of \thmref{thm:weak} will not work. 

We conclude \secref{sec:converse} with a technical remark. 
\begin{remark}
The assumptions in \thmref{thm:weak} are weaker than those in \thmref{thm:mainc}, because the differential entropy of $X_1$ not  being $-\infty$ implies that $X_1$ must have a density. The  assumption that $X_1$ must have a density is not superficial. For example, consider $Z_i \equiv 0$ and $X_1$ uniformly distributed on the Cantor set, and $a = 2.9$. Clearly this system can be stabilized with 1 bit, by telling the controller at each step the undeleted third of the interval the state is at. This is lower than the result of \thmref{thm:main}, which states that $M^\star_\beta$ would be $3$ if $X_1$ had a density. Beyond distributions with densities, we conjecture that $M^\star_\beta$ will depend on the Hausdorff dimension of the probability measure of $X_1$.
\end{remark}

\section{Generalizations}
\label{section:generalizations}

In this section, we generalize our results in several directions. In most cases we only outline the mild differences in the proof.

\subsection{Constant-Length Time Delays}
\label{sec:constantdelay}
Many systems have a finite delay in feedback. To model this, we can force $U_n$ to depend on only the feedback up to round $n-\ell$, i.e. 
\begin{align}
 U_n = \mathsf g_n( \mathsf f_1(X_1), \mathsf f_2(X^2), \ldots, \mathsf f_{n-\ell} (X^{n-\ell})),
\end{align}
where $\mathsf f_n(X^n)$ is the quantizer's output at time $n$, as before.  

We argue here that this makes no difference in terms of the minimum number of bits required for stability. We state the modified result next.

\begin{thm}

\label{thm:main3}

Let $X_1$, $Z_n$ in \eqref{eq:systemscalar} be independent random variables with bounded $\alpha$-moments. Assume that $h(X_1) > - \infty$.
The minimum number of quantization points to achieve $\beta$-moment stability, for any $0 < \beta < \alpha$ and with any constant delay $\ell$ is given by $\lfloor a\rfloor + 1$. 
\end{thm}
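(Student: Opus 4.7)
The converse direction, $M \geq \lfloor a \rfloor + 1$, is immediate: a delay-$\ell$ quantizer-controller is a special case of a delay-free one (one that ignores its most recent $\ell$ quantizer outputs), so the lower bound of Theorem \ref{thm:mainc} transfers without change.

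For achievability, my plan is to reduce the delay-$\ell$ problem to a delay-free one by a change of variables and then appeal to Theorem \ref{thm:main}. I would define the virtual state
\[
 \bar X_n \triangleq X_n - \sum_{j=0}^{\ell-1} a^{\ell-1-j} Z_{n-\ell+j},
\]
which is $X_n$ with the most recent $\ell$ noise contributions stripped away. A short calculation starting from $X_{n+1} = a X_n + Z_n - U_n$ yields
\[
 \bar X_{n+1} = a \bar X_n + a^\ell Z_{n-\ell} - U_n.
\]
Reindexing by $W_n \triangleq \bar X_{n+\ell}$ and $V_n \triangleq U_{n+\ell}$ turns this into $W_{n+1} = a W_n + a^\ell Z_n - V_n$, a delay-free scalar system with the same gain $a$ and a new i.i.d.\ noise $a^\ell Z_n$ that still has bounded $\alpha$-moment. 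Using the alternative expression $W_n = a^\ell X_n - \sum_{j=0}^{\ell-1} a^{\ell-1-j} U_{n+j}$, the encoder can compute $W_n$ at real time $n$: it observes $X_n$ directly, and each $U_{n+j}$ with $0 \leq j \leq \ell - 1$ is a deterministic function of $\mathsf f_1, \ldots, \mathsf f_{n+j-\ell}$, i.e.\ of quantizer outputs the encoder has already produced by real time $n - 1$. The encoder then transmits the quantization $\mathsf f_n$ of $W_n$. The controller receives $\mathsf f_n$ at real time $n + \ell$ and emits $V_n = U_{n+\ell}$. In the $W$-time indexing, $V_n$ depends on $\mathsf f_1, \ldots, \mathsf f_n$ --- exactly the delay-free information pattern --- so Theorem \ref{thm:main} applies directly to yield $\limsup_n \E{|W_n|^\beta} < \infty$, equivalently $\limsup_n \E{|\bar X_n|^\beta} < \infty$.

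To deduce stability of $X_n$ itself, I would use that $X_n - \bar X_n = \sum_{j=0}^{\ell-1} a^{\ell-1-j} Z_{n-\ell+j}$ is a fixed finite linear combination of independent random variables with bounded $\alpha$-moment, so by iterating the inequality \eqref{eq:calphaeps} at most $\ell - 1$ times its $\beta$-moment is uniformly bounded in $n$; combining this with $|X_n|^\beta \leq 2^\beta(|\bar X_n|^\beta + |X_n - \bar X_n|^\beta)$ closes the argument. The main obstacle, though more a matter of bookkeeping than a conceptual difficulty, is the information-structure consistency check for the reduction: verifying that the look-ahead quantity $W_n = \bar X_{n+\ell}$ is causally computable by the encoder at real time $n$ (before $\mathsf f_n$ is produced) and that the controller's tracking of the $W$-system uses only quantizer outputs already available at real time $n + \ell$. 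Both follow from the causal dependence of $U_k$ on $\mathsf f_1, \ldots, \mathsf f_{k-\ell}$, which rules out any circular dependency.
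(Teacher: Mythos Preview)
Your proposal is correct and follows essentially the same approach as the paper: both reduce the delay-$\ell$ problem to a delay-free instance via a virtual state (the paper phrases this as ``also delaying the encoder by $\ell$ steps'' and controlling the resulting sequence $\tilde X_n$ with controls scaled by $a^\ell$), invoke Theorem~\ref{thm:main} on that virtual system, and then recover the bound on $X_n$ by noting that the discrepancy is a fixed finite linear combination of noise variables with bounded $\alpha$-moment. Your treatment of the information-structure consistency (that $W_n$ is causally computable at real time $n$ via the identity $W_n = a^\ell X_n - \sum_{j=0}^{\ell-1} a^{\ell-1-j} U_{n+j}$, with no circularity because each $U_{n+j}$ depends only on $\mathsf f_1,\ldots,\mathsf f_{n-1}$) is more explicit than the paper's terse justification, but the underlying idea is identical.
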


\begin{proof}

The problem here is that the encoder sees the system before the controller can act on it. However, if we also delay the encoder seeing the system by $\ell$ time steps, then we can directly use the algorithm we have already constructed. Specifically, if our artificially delayed sequence of system states is $\{\tilde X_n\}$, then the real sequence is given by 
\begin{equation}
 X_n=a^{\ell}\tilde X_n+a^{\ell-1}Z_{n+1}+...+Z_{n+\ell}.
\end{equation}

By \thmref{thm:main}, we can keep $\mathbb E[|\tilde X_n|^{\beta}]$ bounded for any $\beta<\alpha$ \ver{by applying to $\tilde X_n$ the control in \eqref{eq:Uiboundeda} in normal mode and $U_n =0$ in the emergency mode. Since with delay, the controller acts on $X_n$ rather than on $\tilde X_n$, we multiply the control action   \eqref{eq:Uiboundeda} by $a^\ell$ to achieve the same effect on $\tilde X_n$ as without delay.  Furthermore,} each $Z_i$ has bounded $\beta$ moment, so by \lemref{lemma:moments} their sum will have bounded $\beta$-moment, as desired. 

The converse is obvious as even with $\ell = 0$, \thmref{thm:mainc} asserts that the system cannot be stabilized with fewer than $\lfloor a\rfloor + 1$ bits. 

\end{proof}

\subsection{Packet drops}

Suppose that the encoder cannot send information to the controller at all time-steps. Instead, the encoder can only send information at a deterministic set $\mathcal T\subseteq\mathbb N$ of times. 
Formally, 
\begin{align}
 U_n = \mathsf g_n( \{\mathsf f_n(X^n)  \colon n \in \mathcal T\}).
\end{align}

As long as the density of $\mathcal T$ is high enough on all large, constant-sized scales, the same results go through.

\begin{defn}
A set $\mathcal T\subseteq\mathbb N$ is \emph{strongly $p$-dense} if there exists $N$ such that for all $n$ we have 
\begin{equation}
 \frac{| n + i \colon n + i \in \mathcal T, ~ i = 0, \ldots, N - 1 | }{N} > p.
\end{equation}
\label{defn:pdense}
\end{defn}
Note that the constant delay scenario in \secref{sec:constantdelay} amounts to control on a strongly $p$-dense set, with $p \in [0, 1)$ as close to $1$ as desired.
\begin{thm}

\label{thm:main4}

Let $X_1$, $Z_n$ in \eqref{eq:systemscalar} be independent random variables with bounded $\alpha$-moments. Assume that $h(X_1) > - \infty$. The minimum number of quantization points to achieve $\beta$-moment stability is $\lfloor a \rfloor + 1$, for any $0 < \beta < \alpha$ and on any strongly $p$-dense set with some $p\in [0,1]$ large enough so that 
\begin{equation}
 \left(\lfloor a\rfloor+1\right)^p>a. \label{eq:pdenselargep}
\end{equation}
\end{thm}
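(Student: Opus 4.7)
\emph{Plan for \thmref{thm:main4}.} The converse is immediate: any policy stabilizing under drops yields a policy stabilizing without drops with the same $M$ (just transmit arbitrary symbols at times $n \notin \mathcal T$ and have the controller ignore them), so the lower bound $M \geq \lfloor a\rfloor + 1$ from \thmref{thm:mainc} transfers. The substance is the achievability, and my plan is to reduce it to \thmref{thm:main} by blocking time into windows of length $N$ and viewing the sub-sampled process $(X_{jN})_{j\geq 0}$ as a scalar linear system with gain $a^N$.

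First I would fix $N$ large enough so that (i) the strongly $p$-dense property of $\mathcal T$ guarantees at least $L \triangleq \lfloor pN\rfloor + 1$ transmission times in every window $[jN,(j+1)N-1]$, and (ii) the encoder can deliver any element of an alphabet of size $\lfloor a^N\rfloor + 1$ to the controller during each block, namely $(\lfloor a\rfloor + 1)^{L} \geq \lfloor a^N \rfloor + 1$. Condition (i) is immediate from Definition~\ref{defn:pdense}, and (ii) follows from the hypothesis $(\lfloor a\rfloor+1)^p > a$, which makes $(\lfloor a\rfloor+1)^{pN}/a^N$ diverge as $N \to \infty$.

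Next, the block-level dynamics read $X_{(j+1)N} = a^N X_{jN} + \tilde Z_j - \tilde U_j$ with effective noise $\tilde Z_j \triangleq \sum_{i=0}^{N-1} a^{N-1-i} Z_{jN+i}$ and effective control $\tilde U_j \triangleq \sum_{i=0}^{N-1} a^{N-1-i} U_{jN+i}$; applying \lemref{lemma:moments} to $a^{-(N-1)}\tilde Z_j$ shows that $\tilde Z_j$ has $\alpha$-moment bounded uniformly in $j$. I would concentrate all block-level control into the last step ($U_{jN+i} = 0$ for $i < N-1$ and $U_{(j+1)N-1} = \tilde U_j$); at the start of each block the encoder computes the block-level symbol prescribed by the scheme of \thmref{thm:main} applied to the sub-sampled system, writes it as a word of length $L$ over $[\lfloor a\rfloor + 1]$, and transmits one letter at each of the $\geq L$ available slots in the block, which the controller collects and uses to set $\tilde U_j$. \thmref{thm:main} then yields $\limsup_j \E{|X_{jN}|^\beta} < \infty$, and for intermediate $i \in \{0,\ldots,N-1\}$ the identity $X_{jN+i} = a^i X_{jN} + \sum_{k=0}^{i-1} a^{i-1-k} Z_{jN+k}$ together with \eqref{eq:calphaeps} and the bounded $\beta$-moments of the $Z_k$ make $\E{|X_{jN+i}|^\beta}$ bounded uniformly in $j$.

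The hard part is bookkeeping rather than technique: one must argue that a block-level encoder whose informational demand is a fixed $L$ symbols per block can be scheduled on an irregular transmission pattern whose only regularity is the uniform density of Definition~\ref{defn:pdense}. Once $N$ is chosen so that $L$ suffices, the placement of available slots within each block is irrelevant---only the \emph{count} of slots per block matters for the block-level scheme.
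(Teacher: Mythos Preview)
Your reduction is correct and takes a genuinely different route from the paper's. The paper argues directly at the step level: it keeps the zoom-in/zoom-out algorithm of \secref{sec:algo} and observes that the hypothesis $(\lfloor a\rfloor+1)^p>a$ is equivalent to $\bigl(\tfrac{a}{\lfloor a\rfloor+1}\bigr)^{p} a^{1-p}<1$, so that during normal mode the logarithm of the confinement radius still shrinks on average per time step even though only a $p$-fraction of steps carry a sign/order-statistic test; it then lengthens a round so that it contains enough elements of $\mathcal T$, and in emergency mode compensates for the sparser magnitude probes by enlarging the probing factor $P$ (the uncontrolled growth between probes is at most a fixed power of $a$, which the larger $P$ absorbs). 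Your argument, by contrast, passes to the subsampled process $(X_{jN})_j$ with gain $a^N$, uses the $\ge L$ slots in each block to carry a single symbol from an alphabet of size $(\lfloor a\rfloor+1)^{L}\ge\lfloor a^N\rfloor+1$, and invokes \thmref{thm:main} as a black box at the block level.

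The trade-off: the paper's step-level adaptation stays closer to an implementable scheme with smaller constants and delay, since your block-level gain $a^N$ may be enormous and your controller is silent for $N-1$ consecutive steps. Your reduction is cleaner logically---it isolates exactly what the density hypothesis buys (enough bandwidth per block to simulate the alphabet $[\lfloor a^N\rfloor+1]$) and avoids re-opening the Lyapunov analysis of \secref{sec:precise}. One bookkeeping point worth making explicit: the $N$ in Definition~\ref{defn:pdense} is a fixed $N_0$, so your block length should be taken as a multiple of $N_0$; the density of the $k$ constituent length-$N_0$ windows then sums to give more than $p\cdot kN_0$ transmission times in each length-$kN_0$ block, which is what you use.
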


\begin{proof}
The requirement \eqref{eq:pdenselargep} ensures that the bounded case works; indeed, it is equivalent to
\begin{equation}
 \left(\frac{a}{\lfloor a\rfloor+1}\right)^pa^{1-p}<1, 
\end{equation}
which means that the logarithm of the range of $X_n$ decreases on average each time-step.

In the unbounded noise case, we perform the same basic algorithm, but ensuring that the normal mode has enough times in $\mathcal T$, so the duration of the normal mode gets longer if $N$ is large. Likewise, in the emergency mode, it will take longer to catch blow-ups. However, a much weaker condition on $\mathcal T$ suffices for the emergency mode to end: even if $\mathcal T$ contains only $1$ element out of every $N$, we make the probe factor $P$ large enough depending on $N$. The difference between probing every $N$ time steps vs. every time step at most a factor of $a^N$ which is a constant.
\end{proof}

\subsection{Dependent Noise}
\label{sec:dependent}
Here we address a modification in which the noise is correlated rather than independent. 

\begin{prop}

Suppose $\{Z_n\}$ is a Gaussian process whose covariance matrix $\mat M$ (for any number of samples) has spectrum bounded by $\lambda$. Then there is an independent Gaussian process $\{Z'_n\}$ such that the random variables $\{Z_n+Z'_n\}$ are i.i.d. Gaussians with variances $\lambda^2$.
\label{prop:correlated}
\end{prop}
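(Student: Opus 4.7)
The plan is to construct $\{Z'_n\}$ as an independent Gaussian process whose covariance cancels out the off-diagonal structure of $\{Z_n\}$ and brings the total variance up to the uniform spectral bound. Concretely, for every $n$ let $\mat M_n$ denote the $n\times n$ covariance matrix of $(Z_1,\dots,Z_n)$. The hypothesis that the spectrum of $\mat M$ is bounded by $\lambda$ means, after rescaling (we read the statement's $\lambda^2$ as the intended target variance, so the bound on eigenvalues should be the same number), that $\lambda\mat I_n - \mat M_n \succeq 0$ for every $n$. Define $\mat K_n \triangleq \lambda\mat I_n - \mat M_n$. This family is consistent in the Kolmogorov sense: $\mat M_n$ is the top-left $n\times n$ block of $\mat M_{n+1}$, and $\mat I_n$ is likewise the top-left block of $\mat I_{n+1}$, so $\mat K_n$ is the top-left block of $\mat K_{n+1}$.

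Next I would invoke Kolmogorov's extension theorem applied to the centered Gaussian finite-dimensional distributions with covariances $\{\mat K_n\}$, on a probability space taken as a product with the one carrying $\{Z_n\}$; this yields a Gaussian process $\{Z'_n\}$ that is independent of $\{Z_n\}$ and has the prescribed covariance structure. The sum $\{Z_n+Z'_n\}$ is then Gaussian (as the sum of two independent Gaussian processes) with covariance $\mat M_n + \mat K_n = \lambda\mat I_n$ for every $n$, which is exactly what it means for the sum to be a sequence of i.i.d.\ centered Gaussians with the target variance.

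The only conceptual point is that positive semi-definiteness of $\mat K_n$ needs to hold simultaneously for all $n$, not just pointwise; but this is automatic because the spectral bound $\lambda$ is taken uniformly in the number of samples, and eigenvalue interlacing (or direct computation $v^\top \mat K_n v = \lambda\|v\|^2 - v^\top \mat M_n v \geq 0$) makes the verification trivial. Thus the main ``obstacle'' is purely bookkeeping: checking consistency and independence, both of which are standard. No estimates on tails or moments are required, since we are only using linear-Gaussian structure.
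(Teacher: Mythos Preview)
Your proposal is correct and follows exactly the paper's approach: the paper's proof is the one-line remark ``make the covariance matrices of $\{Z_n\}$ and $\{Z'_n\}$ add to $\lambda I$; the assumption means both are positive semidefinite, hence define Gaussian processes.'' You have simply unpacked this sentence, supplying the Kolmogorov consistency check and the explicit verification that $\lambda I_n - \mat M_n \succeq 0$, and you correctly flag the $\lambda$ versus $\lambda^2$ slip in the statement.
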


\begin{proof}
Just make the covariances matrices of $\{Z_n\}$ and $\{Z'_n\}$ add to $\lambda I$; the assumption means both are positive semidefinite, hence define Gaussian processes.
\end{proof}

If the rows of $\mat M$ have $\ell^1$ norm at most $\lambda$, the assumption of \propref{prop:correlated} that $\mat M$ has spectrum bounded by $\lambda$ will be satisfied.  Indeed, we can add a positive semidefinite matrix to such an $\mat M$ to obtain a diagonal matrix with each entry at most $\lambda$: if $\mat M$ has an entry of $x$ at positions $(i,j)$ and $(j,i)$ then we add $|x|$ to the $(i,i)$ and $(j,j)$ entries; it is easy to see that the symmetric matrix 
\begin{equation}
 \begin{pmatrix} |x|& -x \\ -x& |x|\end{pmatrix}
\end{equation}
is always positive semidefinite. Therefore doing this for all non-diagonal entries adds a positive semidefinite matrix to $\mat M$ and still results in a spectrum contained in $[0,\lambda]$, meaning $\mat M$ also had spectrum contained in $[0,\lambda]$.

\begin{thm}

The results in Theorems \ref{thm:main}, \ref{thm:main3}, \ref{thm:main4} extend to the case when $\{Z_n\}$ is correlated Gaussian noise whose covariance matrix has bounded spectrum.
\label{thm:dependent}
\end{thm}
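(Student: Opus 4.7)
The plan is to verify that the proof of \thmref{thm:main} extends to correlated Gaussian noise with bounded-spectrum covariance, using \propref{prop:correlated} as the structural bridge to the i.i.d.\ case. The starting observation is that \lemref{lemma:moments} is already stated and proved for \emph{arbitrarily coupled} $Z_i$'s with uniformly bounded marginal $\alpha$-moments. For Gaussian $Z_n$'s with covariance matrix $\mat M$ of spectrum at most $\lambda$, each marginal variance is bounded by $\lambda$ (diagonal entries of a positive semidefinite matrix do not exceed its spectral norm), so every $Z_n$ has uniformly bounded moments of all orders. Consequently the union-bound estimate \eqref{eq:Zimarkov} and the unconditional tail bounds \eqref{eq:prob2} and \eqref{eq:prob1} in the proof of \thmref{thm:main} carry over without change.

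The delicate point is the conditional tail bound \eqref{eq:Y_itail}, because for correlated noise the conditional distribution of future $Z_n$'s given $\mathcal F_m$ may acquire a shifted mean depending on the past. This is where \propref{prop:correlated} is invoked: enlarge the probability space to contain an independent Gaussian process $\{Z'_n\}$ with covariance $\lambda I - \mat M$, so that $\tilde Z_n \triangleq Z_n + Z'_n$ is i.i.d.\ Gaussian with variance $\lambda$. Substituting $Z_n = \tilde Z_n - Z'_n$ splits each weighted sum entering \eqref{eq:onehand} and \eqref{eq:otherhand} into a $\tilde Z$-part, which behaves exactly as in the i.i.d.\ case because $\tilde Z_n$ is independent of $\mathcal F_m$ for $n\geq m$, and a $Z'$-part, whose conditional moments given $\mathcal F_m$ are controlled by writing $Z'_{m+\ell}$ as its $\mathcal F_m$-conditional mean plus an independent Gaussian fluctuation of variance at most $\lambda$, and then applying \lemref{lemma:moments} to the fluctuation sequence.

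The extensions to \thmref{thm:main3} (constant delays) and \thmref{thm:main4} (packet drops) are then immediate: their proofs reduce to \thmref{thm:main} applied on a system whose driving noise is a fixed-length linear combination of the original $Z_n$'s, which is still Gaussian with covariance spectrum bounded by a constant multiple of $\lambda$. The converses hold a fortiori, since the bounded-spectrum Gaussian noises form a subclass of the noises with bounded $\alpha$-moments for which \thmref{thm:mainc} already applies.

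The main obstacle is the uniform conditional moment bound for the $Z'$-part: one must argue that the $\mathcal F_m$-conditional mean of $\sum_{\ell} a^{-\ell-1} Z'_{m+\ell}$ is small enough not to destroy the $O(t^{-(\alpha-\Delta)})$ decay in \eqref{eq:Y_itail}, which relies on the Gaussian conditional-mean formula together with the spectral bound on $\lambda I - \mat M$. Once that bound is in place, the H\"older estimate leading to \eqref{eq:Yncondexp} goes through uniformly in the conditioning, and \eqref{eq:momentdecrease2} follows as in the i.i.d.\ case.
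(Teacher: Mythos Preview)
The paper's proof is a single line and avoids all the conditional-moment gymnastics: by \propref{prop:correlated}, the \emph{controller itself} generates the independent whitening process $\{Z'_n\}$ and folds it into $U_n$, so that the effective system noise becomes $Z_n + Z'_n$, which is i.i.d.\ Gaussian of variance $\lambda$. The resulting system is then literally an instance of the i.i.d.\ case, and Theorems~\ref{thm:main}, \ref{thm:main3}, \ref{thm:main4} apply verbatim --- including the filtration $\mathcal F_m$, which is now built from the whitened noise.

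Your route instead keeps the original correlated system and attempts an analytical decomposition. It contains a genuine error: you assert that $\tilde Z_n = Z_n + Z'_n$ is independent of $\mathcal F_m$ for $n \geq m$, but $\mathcal F_m$ is generated by $Z_1,\ldots,Z_{m-1}$ (and $\tilde X_1,\ldots,\tilde X_m$), not by $\tilde Z_1,\ldots,\tilde Z_{m-1}$. Since $\{Z_n\}$ is correlated, $Z_{m+\ell}$ --- and hence $\tilde Z_{m+\ell}$ --- is \emph{not} independent of the past $Z$'s; independence of the $\tilde Z$'s among themselves does not yield independence from $\mathcal F_m$. Conversely, the piece you flag as the ``main obstacle,'' namely $Z'_{m+\ell}$, \emph{is} independent of $\mathcal F_m$ by construction, so its $\mathcal F_m$-conditional mean is zero and poses no difficulty at all. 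The problematic conditional mean actually lives in the $\tilde Z$-part (equivalently in $Z_{m+\ell}$ itself), and it can be as large as the past realizations allow, so the uniform-in-$\mathcal F_m$ bound you need for \eqref{eq:Y_itail} does not follow from your argument. The paper's device of having the controller inject $Z'_n$ sidesteps the issue entirely by redefining the system --- and with it the filtration --- in terms of the whitened noise.
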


\begin{proof}

As a result of \propref{prop:correlated}, for any Gaussian noise with known covariance matrix $\mat M$ of bounded spectrum, the controller can simply add extra noise to the system via $U_n$ to effectively make the noise i.i.d. Gaussian, reducing this scenario to the i.i.d. case.

\end{proof}

\subsection{Vector systems}
\label{sec:vector}

The results generalize to higher dimensional systems
\begin{equation}
 X_{n+1}=\mat A X_n+ Z_n-\mathsf B U_n, \label{eq:vector2}
\end{equation}
where $\mat A$ is a $d\times d$ matrix and $Z_n,~ U_n$ are vectors. The dimensionality of control signals $U_n$ can be less than $d$, in which case $\mat B$ is a tall matrix.

For the controls to potentially span the whole space $\mathbb R^d$ when combined with the multiplication-by-$\mathsf A$ amplification, the range of 
\begin{equation}
\left[ \mat B,\, \mat A \mat B,\, \mat A^2 \mat B,\,\dots,\, \mat A^{d-1} \mat B \right] \label{eq:controllability}
\end{equation}
needs to span $\mathbb R^d$. Such a pair $(\mathsf A,\mathsf B)$ of matrices is commonly referred to as \emph{controllable}. (The $0, \ldots, d-1$ powers of $\mat A$ are sufficient in \eqref{eq:controllability}, because by the Cayley-Hamilton theorem any higher power of $\mat A$ is a linear combination of those lower powers). For our results to hold, a weaker condition suffices, namely, we need {\it stabilizability} of $(\mat A, \mat B)$, which is to say that only unstable modes need be controllable. More precisely, in the {\it canonical} representation  of a linear system,
\begin{align}
\begin{bmatrix}
 X^u_{n+1} \\
 X^s_{n+1}
\end{bmatrix} 
= 
\begin{bmatrix}
 \mat A^u & \mat A^\prime \\
 \mat 0 & \mat A^s
\end{bmatrix} 
\begin{bmatrix}
 X^u_n \\
 X^s_n
\end{bmatrix} 
+ Z_n - 
\begin{bmatrix}
\mat B^u \\
 \mat 0
\end{bmatrix} 
U_n,
\label{eq:canonical}
\end{align}
where the matrix $\mat A^s$ has all stable eigenvalues,  the state coordinates $X^s_{n+1}$ cannot be reached by the control $U_n$. Stabilizability means that the pair $(\mathsf A^u, \mathsf B^u)$ is controllable, which ensures that unstable modes can be controlled.

The idea behind our generalization to the vector case, previously explored in e.g. \cite{nair2004stabilizability}, is that we can decompose $\mathbb R^d$ into eigenspaces of $\mat A$ and rotate attention between these parts.

\begin{thm}

\label{thm:main5}

Consider the stochastic vector linear system in \eqref{eq:vector2} with $(\mat A, \mat B)$ stabilizable. Let $X_1, ~ Z_n$ be independent random $\mathbb R^d$-valued random vectors with bounded $\alpha$-moments. Assume that   $h(X_1) > - \infty$.  Let $(\lambda_1,...,\lambda_d)$ be the eigenvalues of $\mat A$, and set
\begin{equation}
 a \triangleq \prod_{j=1}^d \max(1,|\lambda_j|). \label{eq:avec}
\end{equation} 
Then for any $0<\beta<\alpha$, the minimum number of quantization points to achieve $\beta$-moment stability is
\begin{equation}
M_\beta^\star = \lfloor a\rfloor + 1. \label{eq:Mvec}
\end{equation}

\end{thm}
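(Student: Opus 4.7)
The plan is to derive both directions of \eqref{eq:Mvec} by reducing to the scalar results via appropriate coordinate changes. First I would put the system in the canonical form of \eqref{eq:canonical}, so that the controller only needs to act on the unstable block $X_n^u$: the stable block $X_n^s$ has all eigenvalues of $\mat A^s$ inside the unit disc and hence its moments stay bounded automatically by an application of \lemref{lemma:moments} to the exponentially-weighted noise sums produced by unrolling the recursion for $X_n^s$ (the $\mat A'$ cross-term is absorbed into the effective driving noise of $X^s_n$ once the $\beta$-moment of $X^u_n$ is under control). After this reduction we may assume that every $|\lambda_j|\geq 1$, so that $a = |\det \mat A|$ and $(\mat A, \mat B)$ is controllable.

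For achievability I would next bring $\mat A$ into real Jordan form and treat each Jordan block separately; the strictly upper-triangular nilpotent part of a block contributes only polynomial-in-time cross-coupling between coordinates of the block, which can be absorbed into the moment bounds without altering the exponential rate, as in \cite{nair2004stabilizability}. The core construction is a periodic time-sharing schedule in which mode $j$ (with per-step gain $|\lambda_j|$, or $|\lambda_j|^2$ when $\lambda_j$ belongs to a complex conjugate pair treated as a single two-dimensional mode) is probed during a fraction $p_j$ of the slots. Because $\log a = \sum_j \log|\lambda_j| < \log(\lfloor a\rfloor+1)$ strictly (since $\lfloor a\rfloor+1>a$ always, even when $a$ is an integer), one can choose rates $p_j$ with $\sum_j p_j < 1$ and
\begin{equation}
 |\lambda_j|^{1/p_j} < \lfloor a\rfloor+1 \qquad \text{for every } j. \label{eq:rateallocation}
\end{equation}
On the subsampled time-scale of mode $j$, that mode becomes a scalar system with effective gain $|\lambda_j|^{1/p_j}$ driven by an effective noise term that, by \lemref{lemma:moments}, inherits bounded $\alpha'$-moments for every $\alpha'<\alpha$; inequality \eqref{eq:rateallocation} then lets us apply the scalar algorithm of \secref{sec:algo} with $\lfloor a\rfloor+1$ bins to this scalar subsystem, yielding $\beta$-moment stability of that coordinate at the subsampled times. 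This upgrades to all times because the drift between two consecutive probes is at most $|\lambda_j|^{1/p_j}$, which is a fixed constant.

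For the converse direction I would lift the entropy-power argument of \thmref{thm:mainc} to the vector setting. Defining conditional vector entropy power $N_{d'}(X|U)\triangleq \frac{1}{2\pi e}\exp\bigl(\frac{2}{d'}h(X|U)\bigr)$ and combining the vector conditional EPI with the rate constraint $I(X^u_{n-1};U_{n-1}\mid U^{n-2})\leq \log M$, the analogue of \eqref{eq:recur} on the unstable block reads
\begin{equation}
 N_{d'}(X^u_{n}\mid U^{n-1}) \;\geq\; |\det \mat A^u|^{2/d'}\, M^{-2/d'}\, N_{d'}(X^u_{n-1}\mid U^{n-2}) \;+\; N_{d'}(Z^u_{n-1}).
\end{equation}
Boundedness of the left-hand side, which is forced by the vector moment-entropy bound $N_{d'}(X)\leq \kappa_{d',\beta}\,\mathbb E[\|X\|^\beta]^{2/\beta}$, therefore requires $|\det \mat A^u|^{2/d'}M^{-2/d'}<1$, i.e.\ $M>|\det \mat A^u|=a$, giving $M\geq \lfloor a\rfloor+1$.

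The main obstacle will be handling the Jordan-block bookkeeping on the achievability side: one must verify that the nilpotent off-diagonal entries of a size-$s$ Jordan block, whose influence on a coordinate accumulates polynomially over many idle slots, do not destroy the moment bound obtained by applying the scalar algorithm on the currently focused coordinate. I expect this to be routine but tedious, following the same Lyapunov dominating-sequence construction used in \secref{sec:precise} applied block-by-block, once the round length $k$ and probing factor $P$ are re-tuned via \eqref{eq:k} and \eqref{eq:Pchoice} with effective gain $|\lambda_j|^{1/p_j}$ in place of $a$.
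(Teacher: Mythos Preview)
Your overall architecture---canonical form to isolate unstable modes, real Jordan decomposition, time-sharing with densities $p_j$ summing to less than one, and an entropy-power converse on the unstable block---matches the paper's proof. The converse you sketch is essentially identical to what the paper does.

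There is, however, a genuine gap in the achievability argument at the step where you assert that each mode ``becomes a scalar system with effective gain $|\lambda_j|^{1/p_j}$'' to which the algorithm of \secref{sec:algo} applies. This is fine when $\dim\mathcal V_j=1$, but for a complex conjugate pair the $2\times 2$ real block is a rotation-by-$\arg\lambda_j$ composed with a scaling, and the two real coordinates are mixed every step; there is no change of variables that turns this into a real scalar recursion, so the sign/uniform test of \secref{sec:algo} has no direct meaning. The same issue arises, for a different reason, inside a nontrivial Jordan block: the coordinates are coupled through the nilpotent part, and while a bottom-up cascade (stabilize the lowest coordinate, feed it as extra noise into the next, and so on) can be made to work, it is not the ``apply the scalar algorithm'' step you describe. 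The paper handles all of this uniformly by \emph{not} reducing to scalar: it batches the $kp_j$ symbols allocated to $\mathcal V_j$ over a round, observes that the state lies in a ball of radius $(|\lambda_j|+\varepsilon)^{kp_j/\dim\mathcal V_j}C$, and uses a ball-covering argument to point to one of $O\bigl((\lfloor a\rfloor+1-\varepsilon)^{kp_j}\bigr)$ smaller balls. This vector-quantization-with-delay step is the missing idea in your achievability, and the rate condition it requires is $(\lfloor a\rfloor+1)^{p_j}>|\lambda_j|^{\dim\mathcal V_j}$ rather than your \eqref{eq:rateallocation}.

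A second omission is the passage from $\mat B=\mat I$ to general controllable $(\mat A,\mat B)$. The paper does not simply assume this away: it writes any desired control $\hat U_n\in\mathbb R^d$ as $\sum_{i=0}^\ell \mat A^i\mat B\,v_i$ using the controllability span \eqref{eq:controllability}, spreads the pieces over $\ell$ time steps, and invokes the constant-delay result (\thmref{thm:main3}) together with a bound on the extra terms. Your proposal stops at ``$(\mat A,\mat B)$ is controllable'' without this reduction. (A minor point: in the canonical form \eqref{eq:canonical} the cross term $\mat A'$ feeds $X^s$ into the dynamics of $X^u$, not the other way around, so it is absorbed as bounded-$\alpha$-moment noise for $X^u$ \emph{before} you stabilize $X^u$, not after.)
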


\begin{proof}
We first consider the case $U_{n} \in \mathbb R^d$, $\mat B = \mat I$:
\begin{equation}
 X_{n+1}=\mat A X_n+ Z_n - U_n, \label{eq:vector}
\end{equation}
and then explain how to deal with the general stabilizable system in \eqref{eq:vector2}.

By using a real Jordan decomposition, we can block-diagonalize $\mat A$ into 
\begin{equation}
 \mat A = \bigoplus_j \mat A_j
\end{equation}
where $\mat A_j \colon \mathcal V_j \to \mathcal V_j$ with 
\begin{equation}
 \bigoplus_j \mathcal V_j=\mathbb R^d
\end{equation}
such that:

\begin{enumerate}

\item The spectrum of each $\mat A_j$ is either a single real $\lambda_j$ (possibly with multiplicity) or a pair of complex numbers $\lambda_j,\bar{\lambda}_j$ (with equal multiplicity). 

\item The spectral norm of $\mat A_j^k$ is $|\lambda_j|^k k^{O(1)}$.

\end{enumerate}

This decomposition splits any vector $X_n$ into a sum $X_n=\sum_j X_{n,j}$. Each $X_{n,j}$ individually satisfies a control equation with matrix $\mat A_j$, and we will control these separately. Indeed, if 
\begin{equation}
\sup_n\mathbb E[|X_{n,j}|^{\beta}]<\infty 
\label{eq:Xnjbound}
\end{equation}
for all $j$, then we get the desired result. (Note that we do \emph{not} need to assume that the noise $Z_n$ behaves independently on each subspace; getting from separate moment bounds on $X_{n,j}$ to a moment bound on $X_n$ does not require any sort of independence.)

If $|\lambda_j|<1$, we can leave that subspace alone; we will have \eqref{eq:Xnjbound} without doing anything (by Lemma~\ref{lemma:moments}). If $\lambda_j\geq 1$ we will act on this subspace at times in $\mathcal T_j\subseteq\mathbb N$ where $\mathcal T_j$ is strongly $p_j$-dense for some $p_j$ (\defnref{defn:pdense}) with 
\begin{equation}
 (\lfloor a\rfloor + 1)^{p_j}>|\lambda_j|^{\dim(\mathcal V_j)}.
\end{equation}
The assumption \eqref{eq:avec} precisely means that we can pick such $p_j$ with $\sum_j p_j<1$. Generating a partition of $\mathbb N$ into strongly $p_j$-dense sets $\mathcal T_j$ is simple if this constraint holds.

Now we are left to explain how to handle the problem on each $\mathcal V_j$ separately. If $\dim(\mathcal V_j)=1$, then we have done this before. The key point is the second property above on the growth of the spectral norm of $\mat A_j^k$; for fixed, large enough $k$, this growth is slow enough that we can use the same procedure, and the non-trivial Jordan blocks won't matter. 

Now we proceed as before in rounds of $k$ steps, except that the encoder sends everything at the end of a round rather than bit-by-bit (we can do this by introducing a constant amount of delay). At the end of $k$ steps, assuming that $X_{n,\,j}$ started in some ball $B_C(0)$ at the start of the round for $C$ large, the ending value $X_{n+k,\, j}$ will with high probability be contained in a ball $B_{C'}(0)$ with $C'$ given by (for some $\epsilon > 0$)
\begin{equation}
 C'=(|\lambda_j|+\varepsilon)^{\left(\frac{ k p_j}{\dim \mathcal V_j}\right)}C
\end{equation}
assuming that no noise term was very large (the ``high probability" is independent of $j$ by another use of Lemma~\ref{lemma:moments} - note that the high-dimensionality doesn't matter since the Euclidean norm is subadditive).

We also recall that any set in $\mathcal S_j$ of diameter $D$ can be covered by at most $O(\left(\frac{D}{r}\right)^{\dim \mathcal S_j})$ balls of radius $r$. Hence for large enough $k$, we can cover $B_{C'}(0)$
with $O\left(\left(\lfloor a\rfloor + 1 -\varepsilon \right)^{k\,p_j}\right)$ balls of radius $\left(\frac{(|\lambda_j|+\varepsilon)C}{\lfloor a\rfloor + 1 -\varepsilon}\right)$.

The upshot of this is that at the end of a round of length $k$, assuming no blow-up happened, the encoder has enough bandwidth to point to one of many balls of smaller radius than the starting ball and assert that $X_{n,j}$ is now inside that ball. Hence, typical behavior of the system will reduce the radius of $X_{n,j}$ by a constant factor each round.

Emergency mode proceeds in the same way as before, using balls of larger and larger size. The effect is still that the $\beta$-moment of the radius decreases in expectation when large, hence is bounded.

To prove the converse, we can project out stable eigenmodes of $\mat A$ as done in \cite{kostina2016ratecost}, and then apply a straightforward generalization of the reasoning in \secref{sec:converse} to the resulting vector system. This converse will apply to the system with low-dimensional controls in \eqref{eq:vector2}, because we can always augment the matrix $\mat B$ to make it full rank and extend the dimension of the control signal accordingly.

To show an achievable scheme for \eqref{eq:vector2}, we will reduce the problem to the delayed version of \eqref{eq:vector}. Although we only addressed delays in the $1$-dimensional setting in \secref{sec:constantdelay}, the exact same argument shows that delays change absolutely nothing in all dimensions. We will focus on the case of controllable $(\mathsf A, \mathsf B)$, because if $(\mathsf A, \mathsf B)$ is merely stabilizable we can always ignore the uncontrollable stable part as per the canonical representation \eqref{eq:canonical}.  We will use the spanning set of matrices to give an arbitrary control with a delay of $\ell$ steps, where $\ell \leq d - 1$ is such tha t
the range of 
\begin{equation}
\left[ \mat B,\, \mat A \mat B,\, \mat A^2 \mat B,\,\dots,\, \mat A^{\ell} \mat B \right] 
\end{equation}
spans $\mathbb R^d$.
Then any vector $v\in\mathbb R^d$ can be written as
\begin{equation}
 v=\mat B v_0+\mat A \mat B v_1+\dots +\mat A^\ell \mat B v_\ell, \label{eq:decomp}
\end{equation}
where $v_i\in \mathrm{ran}(\mat A^i \mat B)$ for each $i$, and $|v_i|=O(|v|)$. 

Now, suppose that the sequence $\{\hat U_n\}_{n \in\mathbb Z^+}$ solves the control problem with delay $\ell$, meaning that the control $\hat U_n$ is chosen at time $n$ but kicks in at time $n+\ell$. That is, $\{\hat U_n\}$ is chosen so that the sequence $\hat X_n$ given by
\begin{equation}
 \hat X_{n+1}= \mat A \hat X_n +Z_n - \hat U_{n - \ell} \label{eq:vecdelay}
\end{equation}
has bounded moments. We assume that the noises $Z_n$ are common to \eqref{eq:vector2} and \eqref{eq:vecdelay}, so that $\{X_n\}$ and $\{\hat X_n\}$ are coupled together rather than independent.
Per \eqref{eq:decomp} we can write
\begin{equation}
 \hat U_n = \mat B\, U_{0,n}+ \mat A \mat B\, U_{1,n} + \mat A^2 \mat B\, U_{2,n} + \dots + \mat A^\ell \mat B\, U_{\ell, n}.
\end{equation}
To realize control $\hat U_n$ that takes full effect at time $n+\ell$, we can have $\hat U_n$ contribute $\mat B\, U_{i,n}$ to the control at time $n +\ell -i$. If we do this for every $n$, however, we see that the control $\tilde U_n = \mat B\, U_n$ applied at time $n$ will consist of contributions from all $\hat U_{n-\ell}, \ldots, \hat U_{n}$:
\begin{equation}
 U_n=\sum_{i=0}^\ell U_{i,n-\ell+i}, \label{eq:controloriginal}
\end{equation}
and the actual accumulated control by time $n+\ell$ is larger than $\hat U_n$.  Therefore, by applying \eqref{eq:controloriginal} to the state of the original system $X_n$  \eqref{eq:vector2} at time $n$, we will not get exactly $\hat X_n$. However, the difference $\hat X_n - X_n$ is a finite sum of terms of type $\mat A^{j_1}\, U_{j_2,\ell}$ that are bounded, according to \eqref{eq:Uibounded}, in terms of the same majorizing sequence $\tilde X_n$ (\eqref{eq:Xtilde}, \lemref{lemma:max}) that we used to bound the $\beta$-moment of $\hat X_n$. Since  $\tilde X_n$ has bounded $\beta$-moment according to \eqref{eq:momentdecrease} (although we haven't emphasized it, $\tilde X_n$ is also bounded in $\beta$-moment in higher dimensions just as in the $1$-dimensional case), we conclude that the finite number of extra controls has no bearing on stabilizability.  
\end{proof}

\section{Conclusion}
This paper studies the minimum number of bits necessary and sufficient for stability, when fixed-rate quantizers are used, and proves that conveying $\lfloor a\rfloor + 1$ distinct values is necessary and sufficient to achieve $\beta$-moment stability, where $a$ is defined in \eqref{eq:avec}, provided that the independent additive noises have bounded $\alpha$ moments, for some $\alpha > \beta$.   \thmref{thm:main}, which is the main technical result of the paper, proposes and analyses a time-varying strategy to achieve stability of a scalar system under this minimum communication requirement. We use probabilistic arguments to show this result.  \thmref{thm:mainc} shows a matching converse (impossibility) result, attesting that no strategy can achieve stability with a lower amount of communication. We use information-theoretic arguments to show this result. \thmref{thm:weak} relaxes the assumptions of \thmref{thm:mainc}, and shows, using a purely probabilistic argument, that $\lceil a\rceil$ distinct messages are necessary for stability even in the absence of additive noise. Generalizations to constant-length time delays, communication channels with packet drops, dependent noise, and vector systems are presented in \thmref{thm:main3},  \thmref{thm:main4}, \thmref{thm:dependent},  \thmref{thm:main5}, respectively. 

In \cite{ranade2018unboundedArxiv}, we applied a similar strategy to stabilize a system with random gain $a$ (which is constant in the present paper) using finitely many bits at each time step. 

An advantage of the scheme presented in this paper compared to \cite{kostina2016ratecost} is that it uses a fixed number of bits at each time step, and thus is directly compatible with standard block error-correcting codes used for the transmission over noisy channels. Analyzing how our strategy can be applied together with an appropriate error-correcting code to control over noisy channels and whether fundamental limits can be attained that way is an important future research direction.  

While we picked the constants to guarantee a bounded $\beta$-moment,  we did not try to optimize them in order to minimize it.  A natural future research direction, then, is to study, in the spirit of \cite{kostina2016ratecost}, the tradeoff between rate and the attainable $\beta$-moment. It will be interesting to see whether our scheme can approach the lower bound in \cite{kostina2016ratecost}, and to compare its performance with that of the Lloyd-Max quantizer, explored in the context of control in \cite{khina2017lloydmax}.

\appendix

\subsection{Proof of \eqref{eq:normalbound}}
\label{sec:proofnormalbound}
For $1 \leq i \leq k$, we express the system state at time $m + k$ in terms of the system state at time $m+i$: 
 \begin{align} 
X_{m+k} =
 a^{k-i} \Bigg( &X_{m+i} + \sum_{\ell = 0}^{k-i-1} a^{- \ell - 1} U_{m+i+ \ell}
 \dcol{\notag\\ &}{}
+ \sum_{\ell = 0}^{k-i-1} a^{- \ell - 1 } Z_{m +i+ \ell} \Bigg). \label{eq:Xkicontrols}
\end{align}

Applying \eqref{eq:Uibounded}, \eqref{eq:Cmplus1} and \eqref{eq:Cmplusirec}, we can crudely bound the cumulative effect of controls on $X_{m+k}$ as
 \begin{align}
\dcol{\!\!\!\!}{}  \left| \sum_{\ell = 0}^{k-1} a^{ - \ell - 1} U_{m+i + \ell} \right|
&\leq 
\left( a/2 \right)  \sum_{\ell = 1}^{\infty} a^{ - \ell - 1} 
\dcol{ \notag\\ \phantom{=}&}{}
\left( \left( a/2 \right)^{\ell} C_{m+i} +   \frac{ 1 - ({ a}/2)^{\ell}} {1 - { a}/2} B \right) \\ 
&= 
C_{m + i}  + \frac B { a-1 } \\
&\leq \left( a/2 \right)^{-k}  C_m + \frac{a B}{(2 - a) (a - 1)}
\label{eq:controlkinormal} 
\end{align}
Unifying \eqref{eq:Xkicontrols} and \eqref{eq:controlkinormal}, we get
\begin{align}
|X_{m+i}| &\leq |X_{m+k}|  + \left( a/2 \right)^{-k}  C_m 
\dcol{\notag\\ \phantom{=}&}{}
+ \frac{a B}{(2 - a) (a - 1)} + \sum_{\ell = 0}^{k-i-1} a^{- \ell - 1 } |Z_{m +i+ \ell} | \label{eq:boundXik}
\end{align}
By \lemref{lemma:moments}, the sum of random variables on the right ride of \eqref{eq:boundXik} has uniformly bounded $\alpha$-moments, and since by definition of $\tilde X_n$ in \eqref{eq:Xtilde}, $\tilde X_m \leq C_m$ and $|X_{m+k}| \leq \tilde X_{m+k}$, \eqref{eq:normalbound} follows by the means of \eqref{eq:calphaeps}.

\subsection{Proof of \lemref{lemma:max}}
Combining \eqref{eq:Xmj}, \eqref{eq:controlnormal} and $|X_m| \leq C_m$ yields for $ i = 1, 2, \ldots, k+\tau$, 
\begin{align}
 |X_{m+i}| \leq a^{i} \left( 2 C_m + \frac{B}{a-1} 
+ \sum_{\ell = 0}^{i-1} a^{- \ell - 1 } |Z_{m + \ell}| \right),  \label{eq:xboundnormal}
\end{align}
Maximizing the right side of \eqref{eq:xboundnormal} over $1 \leq i \leq k + j$ and using \eqref{eq:xboundnormal}, 
we conclude
\begin{align}
 \max_{1 \leq i \leq k + j} |X_{m+i} | 
 &\leq a^{k+j} 
 \bigg( 2 C_m + \frac{B}{a-1} 
 \dcol{\notag\\&}{}
+ \sum_{\ell = 0}^{k+j-1} a^{- \ell - 1 } |Z_{m + \ell}| \bigg)  \label{eq:mostmax}, 
\end{align}
It remains to bound $C_{m+k+j} $. If $j = 0$, we may simply apply \eqref{eq:Xmpluskbound}, which means, crudely, 
\begin{equation}
 C_{m+k+j} \leq \text{right side of \eqref{eq:mostmax}} + \frac{ a^{k+j} B}{1 - a/2}. \label{eq:Cendbound0}
\end{equation}
If $j > 0$, since the round did not end on step $m + k + j - 1$, we have $C_{m+k+j - 1} < |X_{m+k+j - 1}|$, which means that
\begin{equation}
C_{m+k+j} < P |X_{m+k+j - 1}|. \label{eq:Cendbound}
\end{equation}
Combining \eqref{eq:mostmax}, \eqref{eq:Cendbound0} and \eqref{eq:Cendbound} yields \eqref{eq:max}.

\bibliographystyle{IEEEtran}
\bibliography{../../Bibliography/control,../../Bibliography/it,../../Bibliography/vk}

\begin{IEEEbiography}
    [{\includegraphics[width=1in,height=1.25in,clip,keepaspectratio]{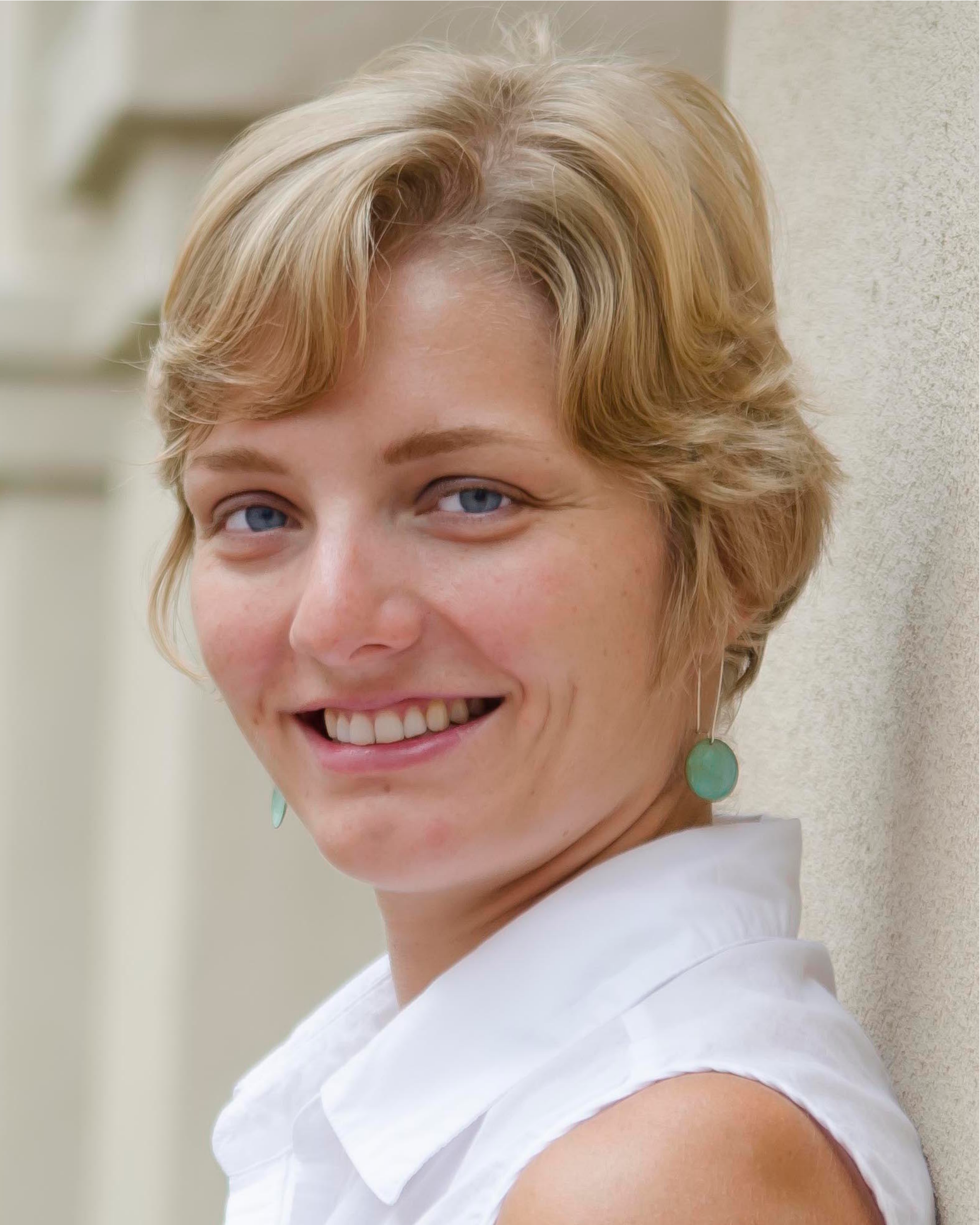}}]{Victoria Kostina}
received the bachelor's degree from Moscow Institute of Physics and Technology (MIPT) in 2004, the master's degree from University of Ottawa in 2006, and the Ph.D. degree from Princeton University in 2013.  During her studies at MIPT, she was affiliated with the Institute for Information Transmission Problems of the Russian Academy of Sciences. 
She is currently a Professor of electrical engineering and computing and mathematical sciences at California Institute of Technology. Her research interests include information theory, coding, control, learning, and communications. 
 She received the Natural Sciences and Engineering Research Council of Canada postgraduate scholarship during 2009--2012, Princeton Electrical Engineering Best Dissertation Award in 2013, Simons-Berkeley research fellowship in 2015 and the NSF CAREER award in 2017.
\end{IEEEbiography}

\begin{IEEEbiography}
    [{\includegraphics[width=1in,height=1.25in,clip,keepaspectratio]{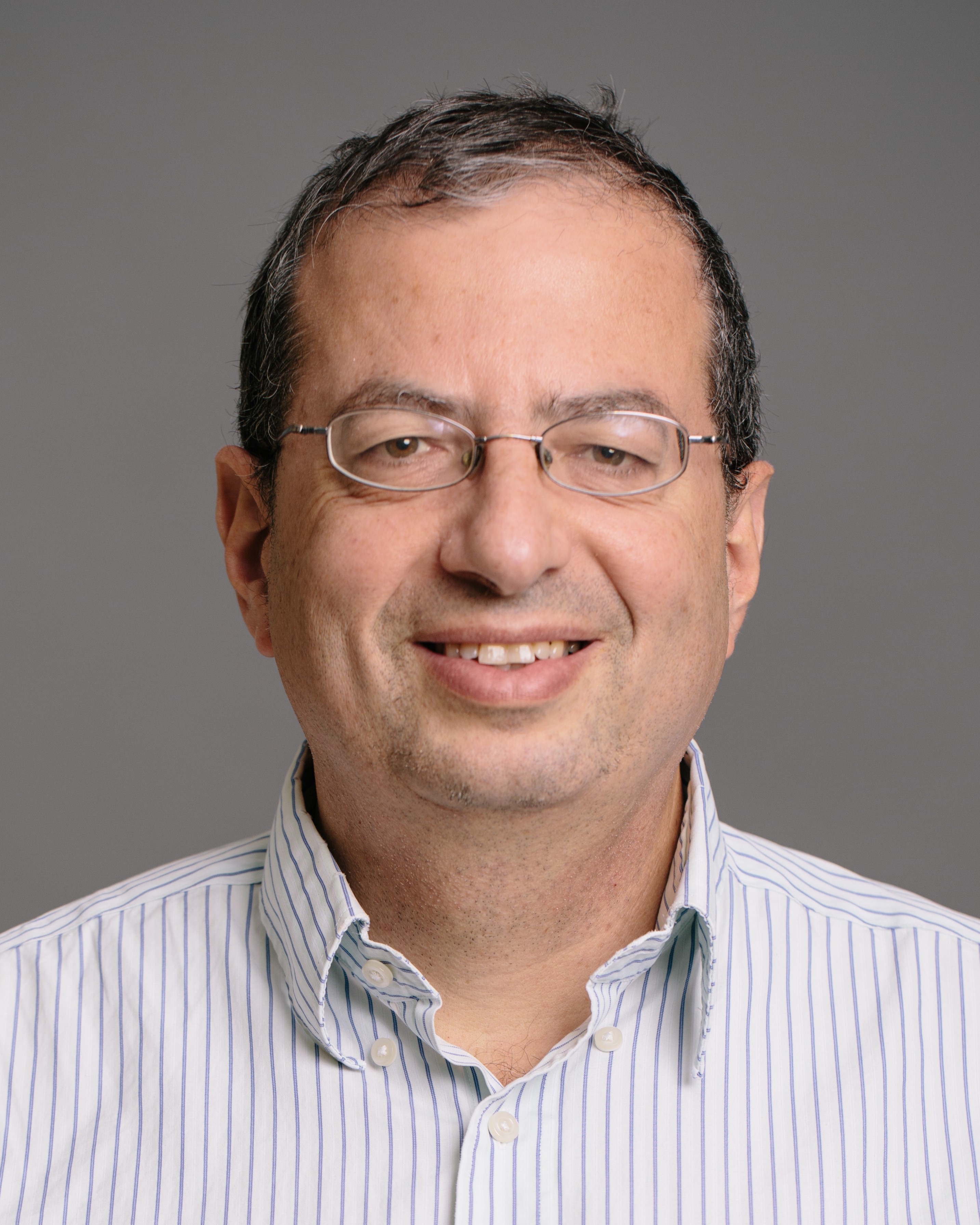}}]{Yuval Peres}
obtained his PhD  in 1990 from the Hebrew University in Jerusalem. In 1993, he joined the faculty of the University of California at Berkeley, where he served as a professor in the mathematics and statistics departments until 2006. From 2006 to 2018, he was a Principal Researcher at Microsoft Research.  He has also taught at Yale and at the Hebrew University, and is currently visiting Kent State University.  Yuval Peres has published more than 300 papers with 200 co-authors and has mentored 21 PhD theses. His research encompasses most areas of probability theory, including random walks, Brownian motion, percolation, and random graphs. He has co-authored books on Markov chains, probability on graphs, game theory and  Brownian motion. Dr. Peres is an IMS fellow and a recipient of the Rollo Davidson prize and the Loeve prize. In 2002, he was an invited speaker at the International Congress of Mathematicians in Beijing, and in 2016 he was elected to the National Academy of Sciences.
\end{IEEEbiography}

\begin{IEEEbiography}
    [{\includegraphics[width=1in,height=1.25in,clip,keepaspectratio]{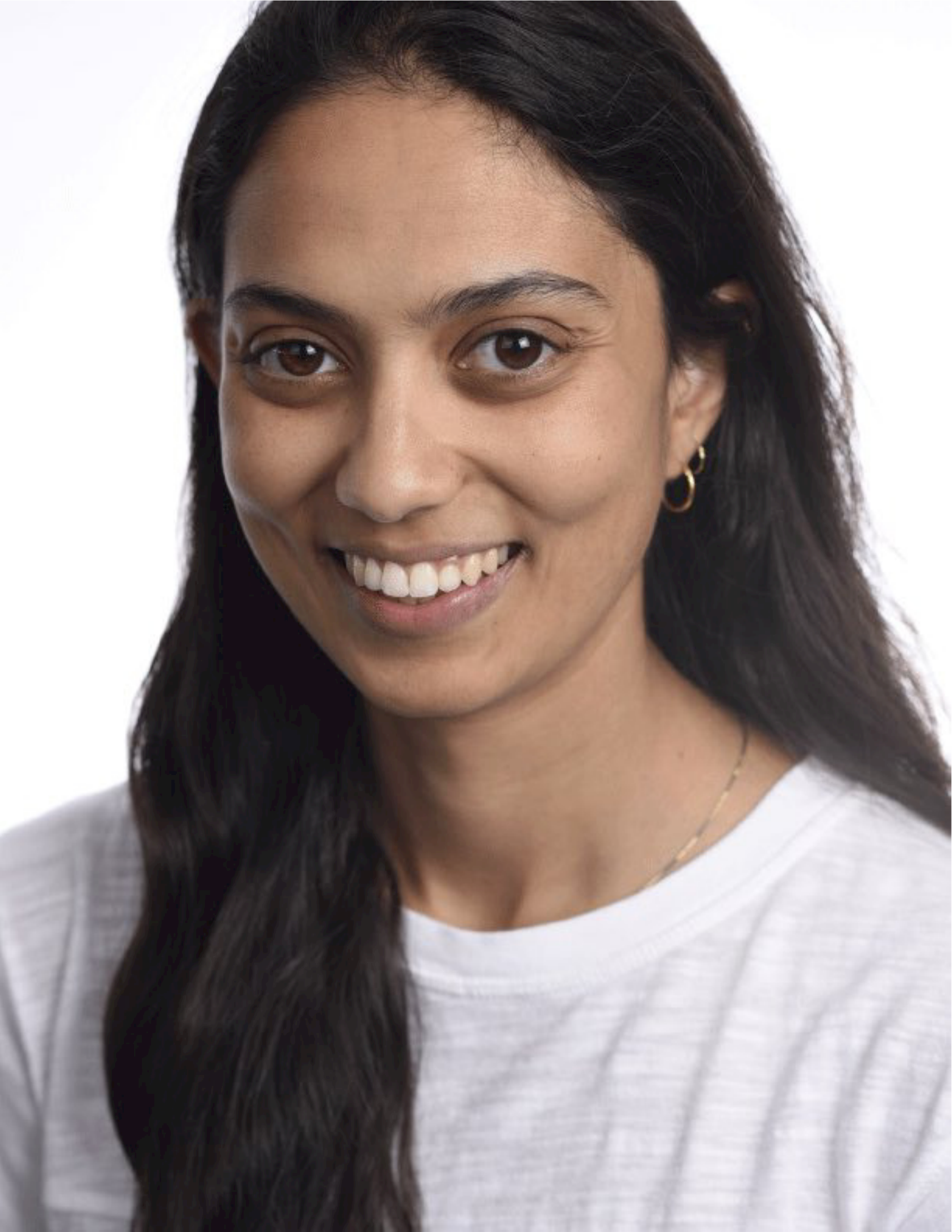}}]{Gireeja Ranade}
is an Assistant Teaching Professor at UC Berkeley. Before this she was a Researcher at Microsoft Research AI, Redmond. She received an MS and PhD in EECS from UC Berkeley and an SB in EECS from MIT. Her research interests have revolved around understanding stochastic systems and problems in control theory, information theory and wireless communications.
\end{IEEEbiography}

\begin{IEEEbiography}
    [{\includegraphics[width=1in,height=1.25in,clip,keepaspectratio]{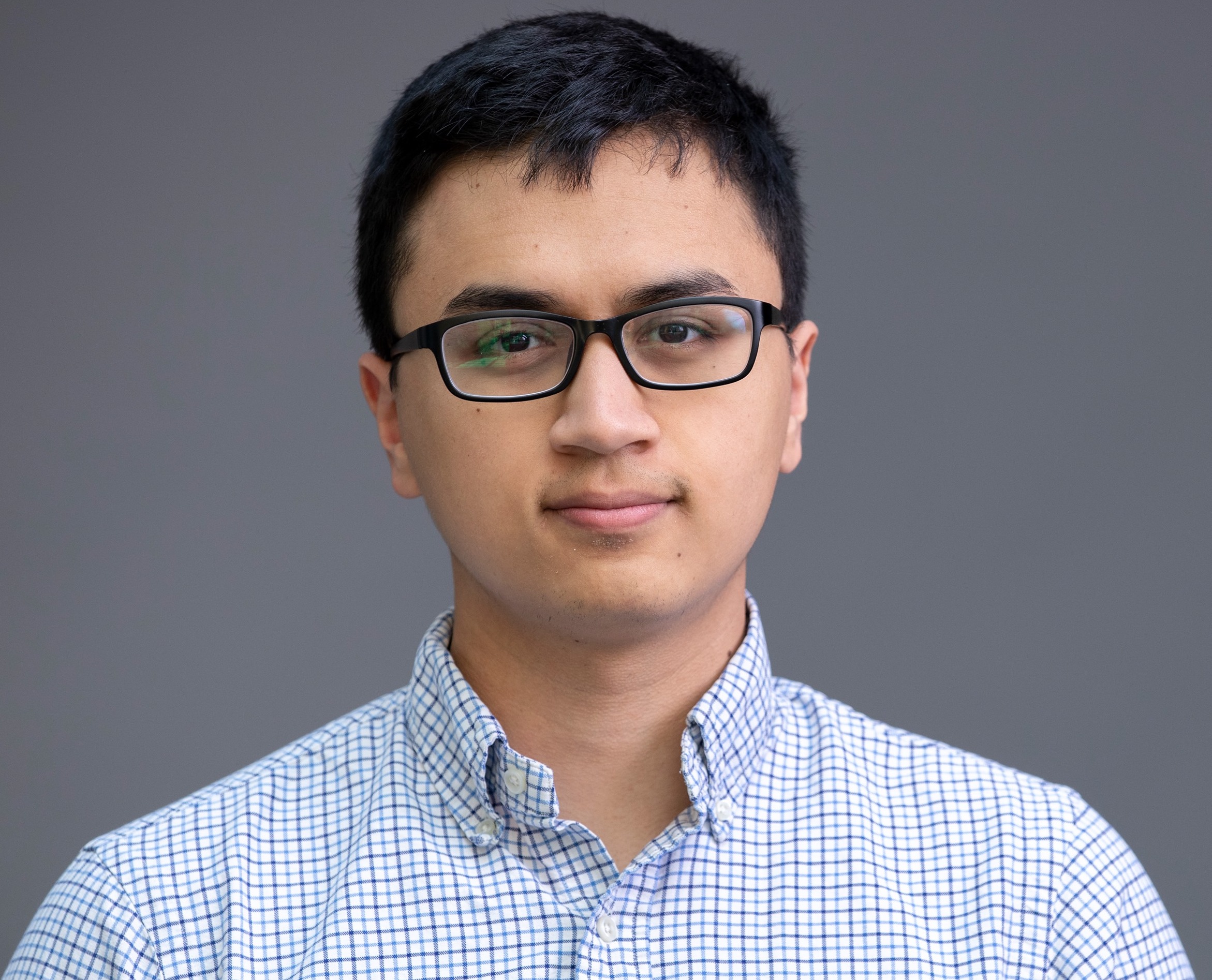}}]{Mark Sellke}
 is a PhD student in mathematics at Stanford advised by S\'ebastien Bubeck and Andrea Montanari. He graduated from MIT in 2017 with a B.S. in mathematics and from the University of Cambridge with a Masters in mathematics with distinction in 2018. He is the recipient of an NSF Graduate Fellowship and a Stanford Graduate Fellowship. Mark's primary research interests are in probability and theoretical machine learning. His work "Chasing Convex Bodies Optimally" won the best paper and best student paper awards at SODA~2020.
\end{IEEEbiography}

\end{document}